\documentclass[11pt]{article}
\usepackage{amssymb, amsmath, amsthm, amsfonts, bbm}
\usepackage{algorithm, algpseudocode}

\usepackage{graphicx,caption,subcaption}
\usepackage{natbib}
\usepackage{color}
\usepackage{xcolor}
\usepackage[colorlinks,linkcolor=blue,citecolor=blue!70!black,urlcolor=black]{hyperref}

\usepackage{fullpage}

\usepackage{booktabs} 
\usepackage{tcolorbox}
\tcbuselibrary{most}

\usepackage{tikz}
\usetikzlibrary{arrows.meta, positioning}
\usepackage{fontawesome5}
  \usetikzlibrary{shapes.geometric, positioning, fit, calc}
  
\usepackage{booktabs} 
\usepackage{tcolorbox}
\tcbuselibrary{most}

\usepackage{bbm}

\usepackage{mathtools}

\def\eqref#1{equation~\ref{#1}}

\def\1{\bm{1}}

\def\rva{{\mathbf{a}}}

\def\rve{{\mathbf{e}}}

\def\rvr{{\mathbf{r}}}

\def\rvx{{\mathbf{x}}}

\def\rvz{{\mathbf{z}}}

\def\rmA{{\mathbf{A}}}
\def\rmB{{\mathbf{B}}}
\def\rmC{{\mathbf{C}}}

\def\rmI{{\mathbf{I}}}

\def\rmR{{\mathbf{R}}}

\DeclareMathAlphabet{\mathsfit}{\encodingdefault}{\sfdefault}{m}{sl}
\SetMathAlphabet{\mathsfit}{bold}{\encodingdefault}{\sfdefault}{bx}{n}

\def\gA{{\mathcal{A}}}

\def\gC{{\mathcal{C}}}

\def\gM{{\mathcal{M}}}

\def\gR{{\mathcal{R}}}

\DeclareMathOperator*{\argmax}{arg\,max}

\newcommand{\N}{\mathbb{N}}

\newcommand{\calM}{\mathcal{M}}

\newcommand{\calH}{\mathcal{H}}
\newcommand{\calA}{\mathcal{A}}

\newcommand{\lct}{\ell_{t,c}}

\newcommand{\Gct}{G_{t,c}}

\newcommand{\gMg}{\gM_{\mathrm{greedy}}}
\newcommand{\gMuni}{\gM_{\mathrm{uni}}}
\newcommand{\gMrr}{\gM_{\mathrm{RR}}}

\usepackage{amsthm}
\newtheorem*{theorem*}{Theorem}
\newtheorem{theorem}{Theorem}
\newtheorem{definition}{Definition}
\newtheorem{condition}{Condition}
\newtheorem{proposition}[theorem]{Proposition}
\newtheorem{corollary}[theorem]{Corollary}
\newtheorem{remark}{Remark}[section]
\newtheorem{lemma}[theorem]{Lemma}

\usepackage[capitalize,noabbrev]{cleveref}
\crefname{theorem}{Theorem}{Theorems}
\crefname{lemma}{Lemma}{Lemmas}
\crefname{corollary}{Corollary}{Corollaries}
\crefname{proposition}{Proposition}{Propositions}
\crefname{definition}{Definition}{Definitions}
\crefname{example}{Example}{Examples}
\crefname{remark}{Remark}{Remarks}
\crefname{assumption}{Assumption}{Assumptions}
\crefname{fact}{Fact}{Facts}
\crefname{result}{Result}{Results}
\crefname{problem}{Problem}{Problems}
\crefname{question}{Question}{Questions}

  {\end{tcolorbox}}
 
  {\end{tcolorbox}}

\title{Plausible Deniability Guarantees for Whistleblowers}

\author{
  Leo Richter$^{1}$\: and Matt J. Kusner$^{2,3}$
}
\date{
$^{1}$\textit{University College London}\\
$^{2}$\textit{École Polytechnique de Montréal}\\
$^{3}$\textit{Mila - Québec Artificial Intelligence Institute}\\
}

\newcounter{daggerfootnote}

\begin{document}

\onecolumn
\maketitle

\begin{abstract}

Whistleblowers are a key safeguard against organizational wrongdoing, but the threat of retaliation deters reporting. Existing whistleblower-protection proposals lack formal privacy guarantees, and existing differential privacy mechanisms do not directly target the natural threat model — one in which the audited organization itself observes auditor selection decisions and uses them to identify reporters. We formalize protection against a strong-adversary threat model as per-report $(0, \delta)$-differential privacy on the transcript of audit selections. Within this framework we prove that a natural approach — randomized response applied at the selection step — can never outperform uniform random auditing by more than $\delta$ at any horizon. We then give a generic mechanism that reduces private auditing to private continual counting: any $(0, \delta)$-DP continual counter plugs in by post-processing, and the audit transcript inherits the same per-report guarantee. Instantiating the reduction with a recent work in continual counting yields per-report $(0, \delta)$-DP with noise scaling as $O(\sqrt{\log T})$ across a horizon of $T$ audit decisions. A utility theorem shows that the selection error vanishes whenever the noisy report gap between the most-reported organization and the runner-up grows faster than $\sqrt{\log T}$. Simulations show a substantial improvement over randomized response. 
\end{abstract}

\setcounter{tocdepth}{0}

\section{Introduction}
\label{sec:intro}
Whistleblowers are a crucial safeguard against organizational wrongdoing~\citep{meijer2024global}, from misconduct in private organizations \citep{dyck2010who,stubben2020evidence} to
national governments \citep{scheuerman2014whistleblowing}. Employees, who account for a significant fraction of whistleblower reports~\citep{dyck2010who}, frequently face retaliation from the organizations they expose \citep{lee2011whistleblower}. 
This is a major deterrent to whistleblowing: perceived threat of retaliation is negatively associated with the intention to blow the whistle~\citep{near1986retaliation, mesmer2005whistleblowing,cassematis2013prediction}.

For this reason, confidentiality and anonymity are widely treated as central to effective whistleblower protection~\citep{sharma2018comparison,chassang2019crime,baljija2023evaluating}. This is also reflected in recent legal frameworks: the EU Whistleblower Protection Directive requires confidential reporting channels and protection from retaliation~\citep{EU20191937}, and the EU AI Act extends these protections to reports of AI Act infringements~\citep{ai_act_2024}. Yet these regimes typically specify confidentiality at the level of the reporting channel. They do not provide a formal guarantee about what an organization may infer from the auditor's subsequent actions.

This distinction matters because anonymity can fail even when reports are submitted confidentially. Suppose an employee reports a concern to an independent auditor, and the auditor soon afterwards investigates the employee's organization. The organization may not observe the report itself, but it observes the audit. If only a few employees had access to the relevant information, the audit can reveal who likely reported. Thus, the central privacy question is not only whether a reporting channel hides names, but whether the observable audit transcript preserves plausible deniability for each individual report.

This problem is particularly acute in AI governance. Many risks from advanced AI systems are likely to be observed first by employees and contractors inside the organizations developing them, before they are visible to external auditors, regulators, or the public. 
Recent proposals for frontier-AI regulation, responsible reporting, and end-to-end algorithmic auditing all rely on insider information as an important signal~\citep{rajiclosing2020, anderljung2023frontierairegulationmanaging,Kolt_Anderljung_Barnhart_Brass_Esvelt_Hadfield_Heim_Rodriguez_Sandbrink_Woodside_2024}. For such systems to be credible, a developer should not be able to infer from the timing or target of a model evaluation, compliance review, or safety audit that a particular employee filed a report.

\begin{figure*}[t]
\centering
\resizebox{\textwidth}{!}{%
\begin{tikzpicture}[
    font=\small,
    >={Stealth[length=2.5mm]},
    flow/.style       ={->, thick, black!70},
    flowred/.style    ={->, thick, red!75!black},
    boundary/.style   ={draw=black!50, dashed, thick, rounded corners=10pt},
    starf/.style      ={star, star points=5, draw=red!75!black, fill=red!75!black,
                        minimum size=0.34cm, inner sep=0pt},
    starh/.style      ={star, star points=5, draw=red!75!black, fill=white,
                        line width=0.7pt, minimum size=0.34cm, inner sep=0pt},
    starg/.style      ={star, star points=5, draw=black!45, fill=black!45,
                        minimum size=0.18cm, inner sep=0pt},
    auditbox/.style ={draw, thick, rounded corners=3pt, fill=black!4, align=center,
                  minimum height=0.95cm, minimum width=2.1cm},
    advbox/.style     ={draw=red!75!black, very thick, rounded corners=3pt, fill=red!4,
                        align=center, minimum height=2.7cm, minimum width=3.7cm},
    aslot/.style ={draw, thick, rounded corners=1pt, minimum size=0.65cm,
               font=\small, fill=white, inner sep=0pt},
    worldframe/.style ={draw=black!40, dashed, rounded corners=3pt, thin, inner sep=4pt},  
repf/.style ={font=\small,      text=red!75!black, inner sep=0pt},
reph/.style ={font=\small,      text=red!75!black, inner sep=0pt},
repg/.style ={font=\scriptsize, text=black!45,    inner sep=0pt},
  ]

    
    

    \node[font=\scriptsize, text=red!75!black, anchor=west]
     at (-0.1, 2.10) {\textbf{World 1} ($\rmR_T$): report filed};        
    \node[repf] (tf) at (0.55, 1.10) {\faFile};
    \foreach \x/\y in {2.55/1.55, 3.0/1.40, 2.45/1.10, 3.15/1.05,
                       2.80/0.85, 3.15/0.65, 2.50/0.65}
      \node[repg] at (\x,\y) {\faFile};
    \node[worldframe, fit={(0.1,0.55) (3.50,1.80)}] (R1) {};           
    \node[font=\tiny, text=red!75!black, anchor=west] at (0.78, 1.10) {the report};
    
    \node[font=\scriptsize, text=red!75!black, anchor=west]
     at (-0.1, 0.05) {\textbf{World 2} ($\rmR'_T$): report \emph{not} filed}; 
    \node[reph] (ts) at (0.55,-0.95) {\faFile[regular]};
    \foreach \x/\y in {2.55/-0.50, 3.0/-0.70, 2.45/-0.95, 3.15/-1.00,
                       2.80/-1.15, 3.15/-1.35, 2.50/-1.35}
      \node[repg] at (\x,\y) {\faFile};
    \node[font=\tiny, text=red!75!black, anchor=west] at (0.78, -0.95) {the report};
    \node[worldframe, fit={(0.10,-1.45) (3.50,-0.25)}] (R2) {};         
    
    \node[boundary, fit={(-0.12,-1.65) (7.3,2.35)}, inner sep=4pt] (hidden) {}; 
    \node[font=\scriptsize\itshape, text=black!70, fill=white,
      inner xsep=4pt, inner ysep=0pt]
     at ($(hidden.north) + (1.7,0)$) {hidden from adversary};

    \node[auditbox] (M) at (5.85, 0.05)
     {{\large\faSearch}\;\textbf{Auditor}\\[2pt]
      {\scriptsize runs mechanism $\gM$}};

  \draw[flow] (R1.east) to[out=0, in=170] ([yshift=2.5mm]M.west);
  \draw[flow] (R2.east) to[out=0, in=190] ([yshift=-2.5mm]M.west);


  \node[font=\scriptsize, text=black!80, anchor=south]
       at (9.50, 0.65)
       {$\rva_T = (a_1, a_2, \ldots, a_T) \in [C]^T$};



    \node[aslot] (a1) at (8.55, 0.05) {\faBuilding};     
    \node[aslot] (a2) at (9.30, 0.05) {\faHospital};
    \node[aslot] (a3) at (10.05, 0.05) {\faUniversity};
    \node[font=\small, text=black!55] at (10.65, 0.05) {$\cdots$};
    
    \foreach \x/\t in {8.55/1, 9.30/2, 10.05/3}
      \node[font=\tiny, text=black!55] at (\x,-0.45) {$t{=}\t$};
    
    \draw[->, thick, black!45] (8.20,-0.85) -- (10.85,-0.85);
    \node[font=\tiny, text=black!60, anchor=west] at (10.90,-0.85) {time};

  \node[font=\scriptsize\itshape, text=black!70, anchor=south]
       at (9.50, 1.15) {observed by adversary};

  \draw[flow] (M.east) -- (a1.west);

  \node[advbox] (adv) at (14.10, 0.05)
       {{\large\faBuilding}\\[2pt]\textbf{Audit target} \textit{(adversary)}\\[4pt]
        {\scriptsize decision rule:}\\
        $\varphi(\rva_T)\in\{0,\,1\}$\\[4pt]
        {\scriptsize advantage:}\\
        $\mathrm{adv}(\gM)\le\delta / 2$};

  \draw[flowred] (10.95, 0.05) -- (adv.west);

\end{tikzpicture}}
\caption{\textbf{Threat model — two possible worlds.} In World 1, the report in question is filed (filled red star), yielding stream $\rmR_T$; in World 2, it is not (hollow icon), yielding $\rmR'_T$, with all other reports identical. The auditor processes the stream confidentially via mechanism $\gM$ and releases the audit transcript $\rva_T \in [C]^T$. The adversary — the audit target itself — observes only $\rva_T$ and applies a decision rule $\varphi : [C]^T \to \{0, 1\}$ to to decide whether the report was filed. The auditor's design bounds the adversary's advantage $\mathrm{adv}(\gM) \le \delta/2$.}
\label{fig:threat-model}
\vspace{-2ex}
\end{figure*}


The key insight to ensure this is that audit decisions need not be deterministic functions of report counts. If the auditor introduces randomness into the organizations selected for audit, the organization cannot be sure whether an audit was triggered by a report or by the auditor’s randomization. This gives whistleblowers plausible deniability: the ability to deny
responsibility due to other alternative possibilities. More specifically, we study the design of randomized audit policies that provide a quantifiable deniability guarantee. At each timestep, an auditor receives whistleblower report and selects an audit target for follow-up. The resulting sequence of audit decisions is observable to the audited organizations. Ideally, the mechanism should protect any single report while still allowing useful audit decisions, and should impose no bound on the number of reports received between audits.

One classic approach to achieve plausible deniability is using
randomized response \citep{warner1965randomized}: the mechanism sometimes follows the true signal and sometimes outputs a random response. Prior work \citep{chassang2019crime} adapts this idea to whistleblowing. Unfortunately, privacy pins randomized response to near-uniform auditing: at any fixed privacy level, it must select almost at random already at the first audit decision, regardless of the horizon.



\subsubsection*{Our contributions:}
\label{subsubsec:contributions}

\begin{itemize}
    \item \textbf{Formalization (Sections~\ref{sec:framework}--\ref{sec:threat_model}).} We formalize whistleblower auditing as per-report $(0, \delta)$-differential privacy on the transcript of audit selection decisions, under a strong-adversary threat model in which the audited organization itself is the adversary. To our knowledge, no prior work targets this exact granularity — per-report adjacency, per-decision observable, fixed horizon — under this threat model.
    \item \textbf{Negative result for randomized response (Section~\ref{sec:rand_response}, Theorem~\ref{thm:RR_tradeoff}).} We prove that randomized response applied at the selection step — the canonical adaptation of Warner's \cite{warner1965randomized} primitive used in the economics literature on whistleblowing \cite{chassang2019crime} — can never outperform uniform random auditing by more than $\delta$ at any horizons.
    \item \textbf{Generic reduction (Section~\ref{sec:mechanism}, Proposition~\ref{prop:privacy_reduction}).} We give a mechanism that reduces private auditing to private continual counting: any $(0, \delta)$-DP continual counter plugs in by post-processing, and the audit transcript inherits the same per-report guarantee. Future improvements in continual counting can be incorporated easily.
    \item \textbf{Concrete instantiation and utility analysis (Sections~\ref{sec:counter-instantiation} and~\ref{sec:utility}).} We instantiate the reduction with the Toeplitz-factorization counter of \citet{fichtenberger2023constant}, which was recently shown to be near-optimal for continual counting by \citet{dvijotham2024efficient}. This obtains per-report $(0, \delta)$-DP with noise scaling as $O(\sqrt{\log T})$. A utility theorem (Theorem~\ref{thm:utility}) shows the error decays whenever the noisy report-count gap between the organization with the most reports and the runner-up organization grows faster than $\sqrt{\log T}$ — an asymptotic improvement over randomized response.
    \item \textbf{Empirical validation (Section~\ref{sec:experiments}).} Simulations show that the resulting mechanism substantially outperforms randomized response in both a static gap sweep and a dynamic online auditing setting.
\end{itemize}

\section{An Auditing Framework}
\label{sec:framework}
We propose the following auditing setup. Fix an audit horizon $T \in \mathbb{N}$ and a number of audit
targets $C \in \mathbb{N}$.\footnote{The audit unit $c$ need not be an entire legal organization. It can denote a sub-team, product line, safety process, or other regulator-chosen unit.} At each
timestep $t \in [T]$, the auditor collects any new whistleblower
reports and selects an audit target. The collected reports form a
\emph{report stream}
$\rmR_T = (\rvr_1, \ldots, \rvr_T) \in \mathbb{N}_0^{C \times T}$,
where the $c$-th entry of $\rvr_t$, denoted $r_{t,c}$, counts the
number of new reports about audit target $c$ at time $t$. The
sequence of audit decisions forms the \emph{audit transcript}
$\rva_T = (a_1, \ldots, a_T) \in [C]^T$. For $t \le T$, we write
$\rmR_t$ and $\rva_t$ for the length-$t$ prefixes. We write $\gR_T := \mathbb{N}_0^{C \times T}$ for the space of report
streams and $\gA_T := [C]^T$ for the space of audit transcripts.


The auditing setup must satisfy the following conditions.

\begin{condition}[Privacy]
\label{cond:privacy}
The auditor should guarantee each report a fixed level of privacy, regardless of when it is submitted or which audit target it concerns.

\end{condition}

We formalize Condition~\ref{cond:privacy} using the following definitions of adjacency and privacy:

\begin{definition}[Adjacent Report Streams]
\label{def:adjacency}
Two report streams $\rmR_T, \rmR'_T$ are \emph{adjacent}, written
$\rmR_T \sim \rmR'_T$, if there exists a unique $(t^*,c^*) \in [T]\times[C]$
such that $|r_{t^*,c^*} - r'_{t^*,c^*}| = 1$ and $r_{t,c} = r'_{t,c}$ for all
$(t,c) \neq (t^*,c^*)$.
\end{definition}

\begin{definition}[$(\epsilon,\delta)$-Differential Privacy]
\label{def:DP}

A randomized mechanism $\gM : \gR_T \to \gA_T$ satisfies
$(\epsilon,\delta)$-\emph{differential privacy} with respect to the
adjacency relation of Definition~\ref{def:adjacency} if for all
adjacent $\rmR_T \sim \rmR'_T$ and all $A \subseteq \gA_T$,
\[
  \Pr[\gM(\rmR_T) \in A] \;\le\; e^{\epsilon} \cdot \Pr[\gM(\rmR'_T) \in A] + \delta.
\]
\end{definition}

For the special case of $(0,\delta)$-DP, we can demand equivalently: $d_{\mathrm{TV}}(\gM(\rmR_T),\gM(\rmR_T')) \le \delta$ for every adjacent pair (a standard fact proven in Proposition~\ref{app:prop:tv-char}).

\begin{remark}[Per-report and per-whistleblower privacy]
\label{rem:per-report-privacy}
Under definition~\ref{def:adjacency}, a privacy guarantee stated as in definition~\ref{def:DP} should be interpreted as per-report (\emph{event-level}). If a whistleblower files $k$ reports, then the inclusion of that whistleblower's full contribution may change $k$ report events. By standard group privacy, a mechanism satisfying $(\epsilon,\delta)$-DP for single-report adjacency satisfies \[ \left(k\epsilon,\; \delta \sum_{i=0}^{k-1} e^{i\epsilon}\right)\text{-DP} \] for the inclusion or exclusion of all $k$ reports by that whistleblower (see Appendix Lemma~\ref{app:lem:group_privacy_reports}). In the special case of $\epsilon=0$ used throughout this paper, this becomes $(0,k\delta)$-DP. Consequently, a nontrivial per-whistleblower guarantee requires a bound on the number of reports attributed to one whistleblower. 
\end{remark}

Audits based on whistleblower reports are often costly \citep{kuang2021whistleblowing} so it is reasonable for an auditor to audit less frequently than whistleblowers can issue reports, motivating the following practical condition:



\begin{condition}[Report Reset]
\label{cond:reset}
After a target is audited, previous reports about that target are no
longer counted toward future audit decisions.
\end{condition}

This condition reflects the interpretation of an audit as resolving the currently
pending concerns about that audit target. Thus, at time $t$, the relevant count for
target $c$ is not the cumulative number of reports over the entire horizon,
but the number of \emph{active} reports: reports received since the most recent
audit of $c$, or since $t=1$ if $c$ has not yet been audited.

\begin{definition}[Active counts]
\label{def:active-counts}
Fix a report stream prefix $\rmR_t$ and a decision prefix
$\rva_{t-1} \in [C]^{t-1}$. For each $c \in [C]$, let
$\tau_c(t) \coloneqq \max\{s \le t-1 : a_s = c\}$, with
$\max \emptyset \coloneqq 0$, denote the most recent audit of $c$
before time $t$. The \emph{active count} of target $c$ at time $t$ is
\[
    n_{t, c} \coloneqq \sum_{s = \tau_c(t)+1}^{t} r_{s, c},
\]
the number of reports about $c$ received since its last audit,
including those received at time $t$ itself. When the decisions are
produced by a mechanism $\gM$, the $n_{t, c}$ are random variables
through $\rva_{t-1}$.
\end{definition}

\section{A Threat Model}
\label{sec:threat_model}
\paragraph{The adversary.} Following the strong-adversary threat model of
\citet{kulynychunifying}, we take the natural adversary in our setting to be
the audit target itself: an entity with near-complete knowledge of its
internal state and reporting history. The adversary has an incentive to identify whistleblowers in
order to retaliate, knows the auditing mechanism $\gM$ including all parameters, and observes the realized audit transcript $\rva_T$. 

We instantiate the strong-adversary model as a binary adjacent-stream distinguishing game. Let $\rmR_T \sim \rmR'_T$ be two adjacent report streams that differ only in whether one report at $(t^*,c^*)$ was filed. The adversary's task is to distinguish the two worlds
\begin{align*}
H_1:\rva_T \sim \gM(\rmR_T)
\qquad
\text{vs.}
\qquad
H_0:\rva_T \sim \gM(\rmR'_T).
\end{align*}
This captures the relevant privacy question of whether observing the audit transcript lets the audit target infer that a particular report was filed.
Because the adversary is assumed to have near-complete background knowledge, bounding its distinguishing power also bounds the power of weaker adversaries. 


\paragraph{Tests, error rates, and the prior-free guarantee.}
A test for the distinguishing game above is a decision rule
$\varphi : \mathcal{A}_T \to [0,1]$, where $\varphi(\rva_T)$ is the
probability with which the adversary, having observed the transcript
$\rva_T$, rejects $H_0$ and concludes that the additional report was
filed; deterministic rules correspond to $\varphi(\rva_T) \in \{0,1\}$.
Its true- and false-positive rates,
\begin{align*}
\mathrm{TPR}(\varphi)
\coloneqq \mathbb{E}_{\rva_T \sim \gM(\rmR_T)}\bigl[\varphi(\rva_T)\bigr],
\qquad
\mathrm{FPR}(\varphi)
\coloneqq \mathbb{E}_{\rva_T \sim \gM(\rmR'_T)}\bigl[\varphi(\rva_T)\bigr],
\end{align*}
are the power and the size of the test $\varphi$. By the total-variation
characterization of $(0,\delta)$-DP
(Proposition~\ref{app:prop:tv-char}), the privacy guarantee is
equivalent to the following prior-free statement:
\begin{equation}
\mathrm{TPR}(\varphi) \;\le\; \mathrm{FPR}(\varphi) + \delta
\qquad
\text{for every adjacent pair $\rmR_T \sim \rmR'_T$ and every rule $\varphi$.}
\label{eq:prior-free}
\end{equation}
This becomes clear after plugging in the indicator function: $\varphi = \mathbbm{1}_A$, which recovers
Definition~\ref{def:DP} with $\epsilon = 0$, and conversely
$\mathrm{TPR}(\varphi) - \mathrm{FPR}(\varphi) \le
d_{\mathrm{TV}}\bigl(\gM(\rmR_T), \gM(\rmR'_T)\bigr)
\le \delta$ for every $[0,1]$-valued $\varphi$.
The corresponding interpretation thus is: no test on the transcript can detect the report at a rate
exceeding its false-alarm rate by more than $\delta$.
 
\paragraph{The advantage of the adversary.}
Following the baseline-aware perspective of \citet{kulynychunifying},
we translate~\eqref{eq:prior-free} into a statement about the
adversary's decisions by placing a weight on the two hypotheses. Fix
$\pi \in [0,1]$ and let $W \sim \mathrm{Bern}(\pi)$ select the true
hypothesis: the mechanism runs on $\rmR_T$ if $W = 1$ (hypothesis
$H_1$) and on $\rmR'_T$ if $W = 0$ (hypothesis $H_0$), and the
adversary --- who knows $\gM$, both streams, and $\pi$ ---
observes $\rva_T$ and guesses $W$. The \emph{success} of a rule
$\varphi$ is its probability of guessing correctly in this experiment;
by the law of total probability,
\begin{align*}
\mathrm{succ}_{\pi,\rmR_T,\rmR'_T}(\varphi;\gM)
\coloneqq
\Pr[\text{guess} = W]
=
\pi\,\mathrm{TPR}(\varphi)
+ (1-\pi)\bigl(1 - \mathrm{FPR}(\varphi)\bigr).
\end{align*}
The weight $\pi$ may be read either as the adversary's prior belief
that the report was filed, or as the frequency of the reporting event
across the scenarios the guarantee must cover; all bounds below hold
uniformly in $\pi$, so no assumption on $\pi$ is made. Without
observing $\rva_T$, the best achievable success is that of the better
constant rule ($\varphi \equiv 1$ or $\varphi \equiv 0$):
\begin{align*}
\mathrm{base}_{\pi} \coloneqq \max\{\pi, 1-\pi\}.
\end{align*}
The \emph{additive advantage} of $\varphi$ is its improvement over this
baseline,
\begin{align*}
\mathrm{adv}_{\pi,\rmR_T,\rmR'_T}(\varphi;\gM)
\coloneqq
\mathrm{succ}_{\pi,\rmR_T,\rmR'_T}(\varphi;\gM)
- \mathrm{base}_{\pi},
\end{align*}
so that any rule which ignores the transcript has advantage at most
zero. The mechanism's
advantage at weight $\pi$ is the largest such improvement over all
adjacent streams and decision rules:
\begin{align*}
\mathrm{adv}_{\pi}(\gM)
\coloneqq
\sup_{\rmR_T \sim \rmR'_T}\,
\sup_{\varphi}\,
\mathrm{adv}_{\pi,\rmR_T,\rmR'_T}(\varphi;\gM).
\end{align*}
 
\begin{theorem}[Prior-independent deniability]
\label{thm:advantage}
If $\gM$ satisfies $(0,\delta)$-differential privacy, then for
every $\pi \in [0,1]$,
\begin{align*}
\mathrm{adv}_{\pi}(\gM)
\;\le\;
\min\{\pi, 1-\pi\}\,\delta.
\end{align*}
In particular, the prior-independent advantage
$\mathrm{adv}(\gM)
\coloneqq \sup_{\pi \in [0,1]} \mathrm{adv}_{\pi}(\gM)$
satisfies $\mathrm{adv}(\gM) \le \delta/2$.
\end{theorem}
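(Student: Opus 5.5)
Fix $\pi\in[0,1]$, an adjacent pair $\rmR_T\sim\rmR'_T$, and a rule $\varphi:\gA_T\to[0,1]$. Write $\mathrm{TPR}$ and $\mathrm{FPR}$ for its rates. The plan is to bound $\mathrm{succ}-\mathrm{base}_\pi$ using only the prior-free inequality~\eqref{eq:prior-free}. That inequality holds for every $[0,1]$-valued rule and every adjacent pair, by the total-variation characterization of $(0,\delta)$-DP (Proposition~\ref{app:prop:tv-char}). The key structural point is that the advantage over the better constant rule should be controlled by whichever hypothesis carries less weight. To exploit this, I will split into the two cases $\pi\le 1/2$ and $\pi\ge 1/2$, and compare $\varphi$ against the constant rule that is optimal in each case.

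\emph{Case $\pi\le 1/2$.} Here $\mathrm{base}_\pi=1-\pi$, so
\[
\mathrm{adv}=\pi\,\mathrm{TPR}+(1-\pi)(1-\mathrm{FPR})-(1-\pi)=\pi\,\mathrm{TPR}-(1-\pi)\mathrm{FPR}.
\]
Since $\mathrm{FPR}\ge 0$ and $1-\pi\ge\pi$, this is at most $\pi(\mathrm{TPR}-\mathrm{FPR})$. Applying~\eqref{eq:prior-free} then gives at most $\pi\delta=\min\{\pi,1-\pi\}\delta$.

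\emph{Case $\pi\ge 1/2$.} Here $\mathrm{base}_\pi=\pi$, and
\[
\mathrm{adv}=-\pi(1-\mathrm{TPR})+(1-\pi)(1-\mathrm{FPR}).
\]
Since $1-\mathrm{TPR}\ge 0$ and $\pi\ge 1-\pi$, this is at most $(1-\pi)\bigl((1-\mathrm{FPR})-(1-\mathrm{TPR})\bigr)=(1-\pi)(\mathrm{TPR}-\mathrm{FPR})\le(1-\pi)\delta$.

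In both cases the bound is uniform in the pair and in $\varphi$. Taking suprema therefore yields $\mathrm{adv}_\pi(\gM)\le\min\{\pi,1-\pi\}\delta$. Finally, $\min\{\pi,1-\pi\}\le 1/2$ for all $\pi\in[0,1]$, so taking the supremum over $\pi$ gives $\mathrm{adv}(\gM)\le\delta/2$.

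The main obstacle is the sign bookkeeping in the two cases. In each case one must discard the term with the \emph{larger} coefficient ($(1-\pi)\mathrm{FPR}$ in the first case, $\pi(1-\mathrm{TPR})$ in the second) after lowering that coefficient to the smaller one, which is valid because the quantity multiplying it is nonnegative. It is also important to use~\eqref{eq:prior-free} rather than only the set form of Definition~\ref{def:DP}, because $\varphi$ may be randomized; the appendix proposition covers this.
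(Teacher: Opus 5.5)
Your proof is correct and is essentially the paper's argument: the same case split at $\pi = 1/2$, the same step of lowering the larger coefficient on the nonnegative term ($(1-\pi)\mathrm{FPR}$, resp.\ $\pi(1-\mathrm{TPR})$), and the same appeal to the prior-free inequality before taking suprema over rules, adjacent pairs, and $\pi$. Your point that randomized rules need the total-variation form of the guarantee, not just the set form of the DP definition, is also correct and matches how the paper justifies that inequality.
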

 
\begin{proof}
Fix an adjacent pair $\rmR_T \sim \rmR'_T$ and a rule $\varphi$, and
abbreviate $\mathrm{TPR} = \mathrm{TPR}(\varphi)$,
$\mathrm{FPR} = \mathrm{FPR}(\varphi)$. If $\pi \le 1/2$, then
$\mathrm{base}_{\pi} = 1 - \pi$ and
\begin{align*}
\mathrm{adv}_{\pi,\rmR_T,\rmR'_T}(\varphi;\gM)
= \pi\,\mathrm{TPR} - (1-\pi)\,\mathrm{FPR}
\le \pi\,(\mathrm{TPR} - \mathrm{FPR})
\le \pi\,\delta,
\end{align*}
where the first inequality uses $\mathrm{FPR} \ge 0$ and
$1 - \pi \ge \pi$, and the second is Eq.~\eqref{eq:prior-free}. The
case $\pi \ge 1/2$ is symmetric:
\begin{align*}
\mathrm{adv}_{\pi,\rmR_T,\rmR'_T}(\varphi;\gM)
= (1-\pi)(1 - \mathrm{FPR}) - \pi(1 - \mathrm{TPR})
\le (1-\pi)(\mathrm{TPR} - \mathrm{FPR})
\le (1-\pi)\,\delta,
\end{align*}
using $1 - \mathrm{TPR} \ge 0$ and $\pi \ge 1 - \pi$. Taking suprema
over $\varphi$ and over adjacent pairs, and noting
$\min\{\pi, 1-\pi\} \le 1/2$, gives both claims.
\end{proof}
 
We adopt $(0,\delta)$-DP as the operational privacy definition for the
rest of the paper. By Theorem~\ref{thm:advantage}, no strong adversary
--- including the audit target itself --- can improve its probability
of correctly determining whether a single report was filed by more
than $\delta/2$, whichever prior it holds.

\begin{remark}[From reports to whistleblowers]
\label{rem:per-whistleblower}
By group privacy, if a whistleblower files at most $k$ reports, the transcript
satisfies $(0,k\delta)$-DP with respect to that whistleblower's full
contribution (cf. Appendix Corollary~\ref{app:cor:per_whistleblower}).
Thus Theorem~\ref{thm:advantage} gives person-level additive advantage at most
$k\delta/2$. An auditor targeting person-level advantage
$\alpha_{\mathrm{wb}}$ under a repeat-reporting budget $k_{\max}$ can set
$\delta=2\alpha_{\mathrm{wb}}/k_{\max}$.
\end{remark}

\begin{remark}[Intermediate observation times]
\label{rem:prefix}
The guarantee holds at every intermediate time simultaneously. Any prefix $\rva_{t}$ is itself $(0,\delta)$-DP, and
any decision rule on $\rva_{t}$ induces a transcript rule with
identical error rates, so the advantage bound of
Theorem~\ref{thm:advantage} holds verbatim at every $t \le T$; see Appendix~\ref{app:rem:prefix} for details.
\end{remark}

\paragraph{Why this threat model is different.}
Our setting differs from prior work in two ways. First, the privacy unit is a
single report, but the public observable is the entire sequence of audit
decisions in $[C]$, rather than a noisy count vector or histogram. Second,
prior whistleblower mechanisms
\citep{warner1965randomized,chassang2019crime} randomize each report at
submission time under an untrusted-auditor model. We instead trust the auditor
with the reports but treat the audit transcript as public and observed by a
powerful insider adversary. Thus privacy must hold for a report's effect on the
full audit transcript over horizon $T$, not merely for a privatized message.

\paragraph{Utility.} If privacy were not a concern, we assume the auditor
would like to select the audit target with the largest number of
whistleblower reports to audit next. To quantify this, we define the error of
an auditing mechanism $\gM$ as the probability it does not
select the audit target with the current largest number of reports.



\begin{definition}[Error]
    \label{def:error}
    Fix a time $t \in [T]$ and a report stream prefix $\rmR_t$. Let
    \[
        C_t^* \coloneqq \argmax_{c \in [C]} n_{t, c}
    \]
    be the set of audit targets with the largest active count
    (Definition~\ref{def:active-counts}) immediately before the audit
    decision at time $t$. The error of a mechanism $\gM$ at time $t$
    is the mis-selection probability
    \[
        \Pr[a_t \notin C_t^*],
    \]
    where the probability is over the mechanism's randomness, and both
    $a_t$ and $C_t^*$ are random through the mechanism's own decision
    prefix $\rva_{t-1}$.
\end{definition}

We now have a concrete way to compare mechanisms. For any mechanism $\gM$
that satisfies Conditions~\ref{cond:privacy} and \ref{cond:reset}, the
auditor offers a fixed per-report DP parameter $\delta$, which implies a
baseline-independent additive-advantage bound of $\delta/2$
(Theorem~\ref{thm:advantage}). For this privacy level, a better mechanism has
strictly lower error, as defined in Definition~\ref{def:error}. The next
natural questions are what guarantees exist for existing mechanisms, and
whether we can do better.



\section{Randomized Response as a Natural Baseline}
\label{sec:rand_response}

A natural baseline is to apply \emph{randomized response} at the audit-selection step~\citep{warner1965randomized}: with some probability $p_{\mathrm{rand}}$ the auditor ignores the reports and selects an audit target uniformly at random, and otherwise selects the target with the largest current report count. Prior work has applied this to whistleblowing~\citep{chassang2019crime}. The mechanism is described in Algorithm~\ref{app:alg:randomized_response}.

Randomized response satisfies
$(0,\, 1 - p_{\mathrm{rand}}^{\,T})$-DP at horizon $T$
(Proposition~\ref{app:prop:RR_privacy}), and its one-step error is at
most $\frac{C-1}{C}\,p_{\mathrm{rand}}$, with equality whenever the
current maximizer is unique (Proposition~\ref{app:prop:RR_error}).
Calibrating through Proposition~\ref{app:prop:RR_privacy} requires
$p_{\mathrm{rand}} \ge (1-\delta)^{1/T}$, which tends to $1$ as the
horizon grows; this is the calibration we use for the experimental
baseline in Section~\ref{sec:experiments}. 
However, a smaller exploration probability also cannot reconcile a fixed privacy level with informative audit selection: the privacy requirement pins randomized response to within $\delta$ of
uniform random auditing, already at the first audit decision. The following theorem shows this:
 
\begin{theorem}[Privacy pins randomized response to near-uniform auditing]
\label{thm:RR_tradeoff}
Fix $C \ge 2$, a horizon $T \ge 1$, and $p_{\mathrm{rand}} \in [0,1]$.
If Algorithm~\ref{app:alg:randomized_response} satisfies
$(0,\delta)$-DP at horizon $T$ for some $\delta \in [0,1)$, then
\[
    p_{\mathrm{rand}} \;\ge\; 1 - \frac{\delta C}{C-1}.
\]
Consequently, for every $t \le T$ and every pre-decision history
$\mathcal{H}_{t-1} = h$ under which the active-count maximizer $c_t^*$
is unique,
\[
    \Pr[a_t \ne c_t^* \mid \mathcal{H}_{t-1} = h]
    \;=\; \frac{C-1}{C}\,p_{\mathrm{rand}}
    \;\ge\; \frac{C-1}{C} - \delta.
\]
In particular, no calibration of randomized response, at any horizon,
improves on the error $\frac{C-1}{C}$ of uniform random auditing by
more than $\delta$. The full proof can be found in section~\ref{app:subsubsec:proof_rr}.
\end{theorem}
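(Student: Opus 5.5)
The plan is to lower-bound the privacy loss of Algorithm~\ref{app:alg:randomized_response} by looking only at the first audit decision. The transcript restricted to $a_1$ is a post-processing of the full transcript: the indicator of any event about $a_1$ is an event about $\rva_T$. So $(0,\delta)$-DP at horizon $T$ gives $d_{\mathrm{TV}}(a_1(\rmR_T), a_1(\rmR'_T)) \le \delta$ for every adjacent pair, by the total-variation characterization (Proposition~\ref{app:prop:tv-char}).

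Next I would construct an explicit adjacent pair whose first-step maximizers differ. Take $\rmR'_T$ to be the all-zero stream. Let $\rmR_T$ agree with it except $r_{1,1}=1$; this is adjacent by Definition~\ref{def:adjacency}. Under $\rmR_T$ the unique maximizer at $t=1$ is target $1$, so
\[
  \Pr[a_1 = 1] = (1-p_{\mathrm{rand}}) + \frac{p_{\mathrm{rand}}}{C}.
\]
Under $\rmR'_T$ all counts are zero, so the greedy branch picks whichever target the algorithm's tie-breaking rule specifies. This is the step I expect to need care, since the bound depends on how Algorithm~\ref{app:alg:randomized_response} breaks ties.

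If ties are broken uniformly at random, then $a_1$ is uniform under $\rmR'_T$. The event $\{a_1 = 1\}$ then gives a gap of $(1-p_{\mathrm{rand}})(1-1/C) = (1-p_{\mathrm{rand}})\frac{C-1}{C}$. If ties are broken deterministically, say toward the smallest index $c_0$, I would instead put the single report on some target $c \neq c_0$. Under $\rmR'_T$ that target is chosen with probability $p_{\mathrm{rand}}/C$, and the gap is $(1-p_{\mathrm{rand}})$, which is even larger. In either case
\[
  \delta \ge (1-p_{\mathrm{rand}})\frac{C-1}{C},
\]
which rearranges to $p_{\mathrm{rand}} \ge 1 - \delta C/(C-1)$.

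For the consequence, fix $t$ and a history $h$ with a unique active-count maximizer $c_t^*$. By the algorithm's definition, conditional on $h$ the auditor selects $c_t^*$ with probability $1-p_{\mathrm{rand}}+p_{\mathrm{rand}}/C$. Hence the conditional mis-selection probability equals $\frac{C-1}{C}p_{\mathrm{rand}}$ (as in Proposition~\ref{app:prop:RR_error}). Substituting the bound on $p_{\mathrm{rand}}$ gives
\[
  \frac{C-1}{C}\,p_{\mathrm{rand}} \ge \frac{C-1}{C} - \delta.
\]
Comparing with the error $\frac{C-1}{C}$ of uniform auditing yields the final sentence of the theorem.
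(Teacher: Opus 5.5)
Your proof is correct and follows the paper's argument essentially step for step. You restrict to the first decision via the prefix/cylinder inequality, compare the all-zero stream with a single report at time $1$ to obtain $(1-p_{\mathrm{rand}})\frac{C-1}{C}\le\delta$, and then invoke Proposition~\ref{app:prop:RR_error} for the error consequence. Your case split on tie-breaking is unnecessary, since Algorithm~\ref{app:alg:randomized_response} already specifies uniform tie-breaking, but it does no harm.
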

 
\begin{remark}[Tightness, and no degradation over time]
\label{rem:RR_tightness}
The requirement of Theorem~\ref{thm:RR_tradeoff} is tight up to the
constant multiplying $\delta$: the choice
$p_{\mathrm{rand}} = C/(C+\delta)$ already suffices for
$(0,\delta)$-DP at \emph{every} horizon simultaneously
(Proposition~\ref{app:prop:RR_horizon_free}), because the
report reset (Condition~\ref{cond:reset}) erases the influence of a
report the first time its target is audited, so leakage does not
accumulate with $T$. Hence the optimal error of $(0,\delta)$-DP
randomized response lies in
$\bigl[\tfrac{C-1}{C} - \delta,\; \tfrac{C-1}{C+\delta}\bigr]$: a
gap-insensitive constant within $\Theta(\delta)$ of uniform auditing,
independent of the horizon. The failure of randomized response is thus
not that its privacy degrades over time, but that at any useful
privacy level it must explore almost always and therefore cannot react
to the report gap --- in contrast to the mechanism of
Section~\ref{sec:generic}, whose error vanishes as the gap
grows.
\end{remark}

\section{A Generic Reduction to Private Continual Counting}
\label{sec:mechanism}

Randomized response solves the privacy problem at the selection step; however, by Theorem~\ref{thm:RR_tradeoff} this pins its error within $\delta$ of uniform random auditing, however large the report gap. Using \emph{continual counting}
mechanisms, we instead privatize the \emph{report counts}, so that a growing
gap can outrun the noise. These mechanisms take a stream of non-negative
counts and output noisy running totals at each step:

\begin{definition}[Continual-counting mechanism]
\label{def:ccm}
Fix a horizon $T \in \mathbb N$. A \emph{continual-counting mechanism}
$\mathcal C$ is a randomized online process which, when fed a stream
$x_1,\dots,x_T \in \mathbb N_0$ sequentially, outputs after each step
$t \in [T]$ a noisy estimate $\hat s_t$ of the prefix sum
$s_t := \sum_{i=1}^{t} x_i$.
Equivalently, for each $x \in \mathbb N_0^T$, the mechanism induces a random
output vector $\mathcal C(\rvx) = (\hat s_1,\dots,\hat s_T) \in \mathbb{R}^T$.
\end{definition}

\subsection{A Generic Private Auditing Algorithm}
\label{sec:generic}

We now give a generic reduction from private continual counting to private
auditing (Algorithm~\ref{alg:mechanism}). Given any continual-counting mechanism $\gC$, the auditing mechanism
$\calM_{\gC}$ maintains, for each audit target, an independent noisy counter for
the number of reports received since that target was last audited. 
At each time step, the auditor observes the noisy pending counts and audits a
target with maximum noisy count. The mechanism implements Condition~\ref{cond:reset} (Report Reset) by restarting an audit target's counter after that target is audited.

The privacy argument then reduces to the privacy of the affected continual
counter, with target selection and the audit transcript obtained from the noisy
counter outputs.

\begin{algorithm}[H]
\caption{Generic Private Auditing Mechanism $\gM_\gC$}
\label{alg:mechanism}
\begin{algorithmic}[1]
\Require Number of audit targets $C$; horizon $T$;
         $(0,\delta)$-DP continual-counting mechanism $\gC^{(T)}$
\State \textbf{Initialize.} For each $c \in [C]$,
  start a fresh instance $\gC^{(T)}_c$ of $\gC^{(T)}$, and set $\ell_c \gets 0$.
\For{$t = 1, 2, \ldots, T$}
  \For{each $c \in [C]$}
  \State \textbf{Receive} new report counts $r_{t,c} \in \N_0$
    \State $\ell_c \gets \ell_c + 1$ \quad (time since last audit of $c$)
    \State feed $r_{t,c}$ to $\gC^{(T)}_c$ as its $\ell_c$-th input;
           let $\tilde n_{t,c}$ be the $\ell_c$-th output
           \label{line:count}
  \EndFor
  \State \textbf{Select} $a_t \gets \arg\max_{c \in [C]}\, \tilde{n}_{t,c}$
    \quad (ties broken uniformly at random). \label{line:select}
  \State \textbf{Restart:} after auditing $a_t$, replace $\gC^{(T)}_{a_t}$ with a fresh instance of $\gC^{(T)}$ and set $\ell_{a_t} \gets 0$.
    \label{line:restart}
\EndFor
\end{algorithmic}
\end{algorithm}

\subsection{Privacy Guarantee}
\label{sec:privacy}


\begin{proposition}[Privacy]\label{prop:privacy_reduction}
Fix a horizon $T \in \mathbb{N}$. If $\gC$ satisfies $(0,\delta)$-differential privacy as a continual-counting mechanism on streams of length $T$, then Algorithm~\ref{alg:mechanism}, instantiated with $\gC$, satisfies $(0,\delta)$-differential privacy with respect to adjacent report streams $\rmR_T \sim \rmR_T'$. That is, for all measurable $A \subseteq \gA$,
\vspace{-0.25em}
\begin{align*}
  \Pr[\gM_\gC(\rmR_T) \in A]
  \le
  \Pr[\gM_\gC(\rmR_T') \in A] + \delta .
  \label{eq:privacy_reduction}
\end{align*}
\vspace{-0.75em}
\end{proposition}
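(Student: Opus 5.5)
The proof is a coupling argument followed by a total-variation bound. By the characterization in Proposition~\ref{app:prop:tv-char}, it suffices to show that $d_{\mathrm{TV}}(\gM_\gC(\rmR_T),\gM_\gC(\rmR'_T)) \le \delta$ for every adjacent pair. Fix such a pair, differing only at $(t^*,c^*)$.

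First I set up a common probability space. For each target $c$ I fix in advance an infinite supply of independent instances of $\gC^{(T)}$: the $k$-th instance for $c$ is the one used after the $(k-1)$-th restart of $c$. I also fix independent tie-breaking randomness for each time step. Run on $\rmR_T$ or on $\rmR'_T$, the mechanism is then a deterministic function of the input stream and this randomness. Before time $t^*$ the two inputs agree, so the two runs produce identical transcripts $\rva_{t^*-1}$. Only one counter instance ever sees the differing entry $r_{t^*,c^*}$: the instance $\gC_{c^*}$ that is active at time $t^*$. Call it $K$; it is determined by $\rva_{t^*-1}$, which is common to both runs.

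Next I condition on $\rva_{t^*-1}=h$ and on all randomness other than instance $K$. The key step is to write the whole remaining transcript as a post-processing $F$ of instance $K$'s output vector $\mathcal{C}(\rvx)\in\mathbb{R}^T$, where $\rvx$ is $K$'s input stream. The subtlety is that $K$'s input stream depends on the transcript: the instance stops receiving inputs once $c^*$ is next audited. I handle this by imagining that instance $K$ is fed a full length-$T$ stream. Its inputs are $r_{\tau+1,c^*},\dots,r_{T,c^*}$ padded with zeros, where $\tau=\tau_{c^*}(t^*)$ is fixed by $h$. The algorithm reads only the first $\ell$ outputs before the restart and ignores the rest. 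Because $\gC$ is online, the $\ell$-th output depends only on the first $\ell$ inputs, so the padding does not change the outputs the algorithm actually uses. The two runs therefore correspond to input streams $\rvx,\rvx'$ that are adjacent in the continual-counting sense: they differ by one in the single coordinate $t^*-\tau$. Everything else (the other counters, later instances of $c^*$, tie-breaks, and the remaining report entries, which are identical in the two streams) is fixed. So the transcript equals $F(\mathcal{C}(\rvx))$ in one world and $F(\mathcal{C}(\rvx'))$ in the other, with the same measurable $F$.

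Finally I conclude. Post-processing cannot increase total variation, so the conditional TV distance is at most $d_{\mathrm{TV}}(\mathcal{C}(\rvx),\mathcal{C}(\rvx'))\le\delta$ by the DP of $\gC$. Both worlds share the same distribution of $h$ and of the other randomness, so convexity of TV (averaging over the conditioning) gives the unconditional bound $\delta$.

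I expect the main obstacle to be making rigorous that the transcript is a fixed post-processing of a single counter instance. The issue is the adaptive restart time, since the number of inputs the instance receives itself depends on its outputs. The padding together with the online/prefix-consistency property of $\gC$ resolves this. I would state that property explicitly as part of Definition~\ref{def:ccm}: the joint law of $(\hat s_1,\dots,\hat s_\ell)$ depends only on $x_1,\dots,x_\ell$. If this property is missing, I would instead argue that the output vector's first $\ell$ coordinates are measurable in a way compatible with a stopping rule.
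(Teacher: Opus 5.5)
Your skeleton matches the paper's. You isolate the single instance $K$ that sees the extra report, treat the rest of the transcript as post-processing of $K$'s outputs, invoke the counter's DP, and average. The paper phrases the last two steps as an optimal coupling of $\gC^*_{c^*}$, with all other instances coupled identically. Your padding/online argument also correctly handles the data-dependent \emph{end} of $K$'s run, which the paper glosses over.

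\textbf{The step that fails.} You claim that, conditionally on $\rva_{t^*-1}=h$ and on all randomness other than $K$, the transcripts are within $d_{\mathrm{TV}}(\mathcal{C}(\rvx),\mathcal{C}(\rvx'))$. But $K$ started at time $\tau+1\le t^*$, and its outputs at times $\tau+1,\dots,t^*-1$ helped produce $h$. The event $\{\rva_{t^*-1}=h\}$ contains the event that $K$ was never the noisy argmax before $t^*$, and which instance is active at $t^*$ depends on $K$'s own outputs. So, conditionally, $K$'s output vector has the law of $\mathcal{C}(\rvx)$ conditioned on its first $t^*-\tau-1$ outputs lying in a survival-selected set, not the law $\mathcal{C}(\rvx)$. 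The $(0,\delta)$ hypothesis only bounds the TV of the remaining outputs \emph{averaged} over all prefixes. On a selected prefix set it can be much larger, and since different $h$ select different instances, averaging over $h$ does not restore the unconditional average. The paper's Step~4 makes the same implicit assumption. The dangerous adaptivity is the \emph{start} of the run (survival selection), not its end.

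\textbf{Why this cannot be patched.} For general counters the statement is false. Take $C=2$ and a counter that draws $U\sim\mathrm{Unif}[0,1]$ once. It outputs $\hat s_t=M$ for all $t$ if $U\ge\delta$ (``hot'') and $\hat s_t=s_t-M$ if $U<\delta$ (``cold''), with $M\ge 1$. It is online and $(0,\delta)$-DP, since the hot branch ignores the data. Let $\rmR'_T$ be all-zero and let $\rmR_T$ add one report at $(t^*,1)$.
\begin{itemize}
\item A hot counter always attains the maximum, so it is audited within a few steps. The hot/cold states of the two active counters therefore form a Markov chain whose stationary law puts mass $(1-\delta)/2$ on (target~1 cold, target~2 hot). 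For $t^*\gg 1/\delta$, the counter receiving the report is cold with probability about $1/2$, not $\delta$.
\item From that state, under $\rmR_T$ target~1 is audited as soon as target~2's fresh counter is cold, which happens with probability $\delta$ per step. Under $\rmR'_T$ that event is a tie, so the rate is $\delta/2$.
\item The run of target-2 audits starting at $t^*$ is thus $\mathrm{Geom}(\delta)$ versus $\mathrm{Geom}(\delta/2)$, at TV about $1/4$. The cold/cold state has mass $\delta$ and offsets this by at most $\delta$; the other states contribute nothing.
\end{itemize}
For $T-t^*\gg 1/\delta$ the transcripts are therefore at TV at least about $(1-\delta)/8-\delta$, which exceeds $\delta$ for small $\delta$.

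\textbf{What is missing.} You need a prefix-conditional hypothesis: for streams differing at position $j$, the conditional laws of $(\hat s_j,\dots,\hat s_T)$ given any output prefix $(\hat s_1,\dots,\hat s_{j-1})$ are within $\delta$ in TV. With that, your conditioning argument goes through essentially verbatim.

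The Gaussian matrix counter of Section~\ref{sec:counter-instantiation} satisfies this hypothesis. Since $\rmB$ and $\rmC$ are lower-triangular, the shift $\rmC\rve_j$ leaves the first $j-1$ coordinates of $\rmC\rvx+\rvz$ unchanged. The noise coordinates $z_j,\dots,z_T$ are independent of the prefix (cf.\ Lemma~\ref{app:lem:fresh}). Hence the conditional TV equals $2\Phi(\|\rmC\rve_j\|_2/(2\sigma))-1\le\delta$ for every prefix. The reduction is sound for the paper's concrete instantiation, but not as a generic reduction.
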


\begin{proof}[Proof sketch]
Let $\rmR_T \sim \rmR_T'$ differ by one report for audit target $c^\ast$
at time $t^\ast$. Before $t^\ast$, the two executions have identical
counter inputs and hence identical restart histories. The extra report can
therefore enter only one active counter instance, whose two input streams are
adjacent. By the assumed $(0,\delta)$-DP guarantee for the continual counter,
the outputs of this affected instance differ by at most $\delta$ in total
variation; all other counter randomness can be coupled identically. The audit
transcript is a post-processing of these counter outputs and restart decisions,
so it inherits the same $(0,\delta)$-DP guarantee. See
Appendix~\ref{app:privacy_reduction_proof} for the full proof.
\end{proof}

\begin{remark}[Per-report privacy guarantee]\label{rem:per-report}

  Proposition~\ref{prop:privacy_reduction} guarantees that the
  audit transcript leaks at most $\delta$ additional information about whether any
  \emph{single specific report} was filed.  If a whistleblower files $k$ reports
  (possibly at different times or about different organizations), their overall
  privacy follows by basic composition \citep{dwork2014algorithmic} and is at
  most $(0, k\delta)$-DP.  
\end{remark}

\section{Concrete Continual-Counting Instantiation}
\label{sec:counter-instantiation}
We now construct a continual-counting mechanism calibrated for $(0,\delta)$-DP.

\subsection{Gaussian Matrix-Mechanism Counter}
\label{subsec:gaussian-counter}

Let $\rmA^{(T)}\in\{0,1\}^{T\times T}$ denote the lower-triangular all-ones prefix-sum matrix, i.e. $\rmA^{(T)}_{i,j} = \mathbf 1[i\ge j]$.
Thus the exact vector of prefix sums of $\rvx\in\mathbb N_0^T$ is
$\rmA^{(T)}\rvx$. 
Suppose we are given matrices $\rmB^{(T)},\rmC^{(T)}\in\mathbb R^{T\times T}$ such that $\rmB^{(T)}\rmC^{(T)} = \rmA^{(T)}$.
We define the associated \emph{Gaussian matrix-mechanism counter} by
\begin{equation}
\label{eq:generic-counter}
\gC(\rvx)
:=
\rmB^{(T)}\!\bigl(\rmC^{(T)}\rvx +\rvz\bigr),
\;
\rvz \sim \mathcal N(0,\sigma^2 \rmI_T).
\end{equation}

For privacy, the key quantity is the largest column norm of the encoder:
\begin{equation}
\label{eq:definition_mt}
M_T \;:=\; \max_{j\in[T]} \|\rmC^{(T)}\rve_j\|_2
\end{equation}

\begin{proposition}[Base-counter Privacy]
\label{prop:base-counter-privacy}
Fix $T\in\mathbb N$.  Suppose $\gC$ is defined by
eq.~(\ref{eq:generic-counter}), with $\rmB^{(T)}\rmC^{(T)}=\rmA^{(T)}$. If the noise scale is chosen as

\begin{equation}
\label{eq:def_variance}
\sigma
=
\frac{M_T}{2\,\Phi^{-1}((1+\delta)/2)},
\end{equation}
where $M_T$ is defined as in eq.~(\ref{eq:definition_mt}),
then $\gC$ is $(0,\delta)$-differentially private. Throughout, we write $\kappa_\delta := \Phi^{-1}\!\left((1+\delta)/2\right)$, so that the noise scale can equivalently be written as $\sigma = M_T / (2\kappa_\delta)$.
\end{proposition}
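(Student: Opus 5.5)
The argument reduces to a total-variation bound between two Gaussians, followed by post-processing. By the total-variation characterization of $(0,\delta)$-DP (Proposition~\ref{app:prop:tv-char}), it suffices to show that for any adjacent streams $\rvx,\rvx'\in\mathbb N_0^T$ (differing by one in a single coordinate $j$),
\[
d_{\mathrm{TV}}\bigl(\gC(\rvx),\gC(\rvx')\bigr)\le\delta .
\]

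First I would peel off the decoder $\rmB^{(T)}$. The output is $\rmB^{(T)}\bigl(\rmC^{(T)}\rvx+\rvz\bigr)$, a deterministic function of $\rmC^{(T)}\rvx+\rvz$. Total variation cannot increase under a measurable map (the data-processing inequality, i.e.\ post-processing). So it is enough to bound the total variation between
\[
P=\mathcal N\bigl(\rmC^{(T)}\rvx,\sigma^2\rmI_T\bigr)
\qquad\text{and}\qquad
Q=\mathcal N\bigl(\rmC^{(T)}\rvx',\sigma^2\rmI_T\bigr).
\]
The mean difference is $\rmC^{(T)}(\rvx-\rvx')=\pm\rmC^{(T)}\rve_j$, whose Euclidean norm is at most $M_T$ by eq.~(\ref{eq:definition_mt}).

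The key computation is the total variation between two isotropic Gaussians with equal covariance and means $\mu,\mu'$ at distance $\Delta=\|\mu-\mu'\|_2$. I would rotate coordinates so that $\mu-\mu'$ lies along the first axis. The remaining coordinates then have identical distributions, and total variation reduces to the one-dimensional case $\mathcal N(0,\sigma^2)$ versus $\mathcal N(\Delta,\sigma^2)$. In one dimension the densities cross at the midpoint $\Delta/2$. Integrating the positive part of the density difference gives
\[
d_{\mathrm{TV}}=\Phi\!\left(\frac{\Delta}{2\sigma}\right)-\Phi\!\left(-\frac{\Delta}{2\sigma}\right)=2\Phi\!\left(\frac{\Delta}{2\sigma}\right)-1 .
\]
This quantity is increasing in $\Delta$. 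With $\Delta\le M_T$ and $\sigma=M_T/(2\kappa_\delta)$, we get $\Delta/(2\sigma)\le\kappa_\delta$. Hence the total variation is at most $2\Phi(\kappa_\delta)-1=2\cdot\frac{1+\delta}{2}-1=\delta$.

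The main obstacle is stating the rotation-invariance step rigorously, though it is routine. Total variation is invariant under the bijection given by an orthogonal map. For product measures that share identical marginals in all but one coordinate, total variation equals the total variation of the differing marginal. Two edge cases need a brief remark. If $\delta=0$, then $\kappa_\delta=0$ and $\sigma$ is infinite or undefined, so that case should be excluded or read as a limit. If some column $\rmC^{(T)}\rve_j$ is zero, the bound holds trivially.
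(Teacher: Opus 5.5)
Your proposal is correct and follows the paper's proof essentially step for step. You reduce by post-processing through $\rmB^{(T)}$ to the encoded Gaussians $\rmC^{(T)}\rvx+\rvz$ versus $\rmC^{(T)}\rvx'+\rvz$, apply the shifted-Gaussian identity $2\Phi(\|\rmC^{(T)}\rve_j\|_2/(2\sigma))-1$, and bound it by $2\Phi(\kappa_\delta)-1=\delta$ using $\|\rmC^{(T)}\rve_j\|_2\le M_T$; the only difference is that you derive that identity by rotating to one dimension, whereas the paper cites it as a textbook fact (Lemma~\ref{app:lem:tv-shifted}).
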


The proof (Appendix~\ref{app:base_counter_privacy_proof}) follows by applying Lemma~\ref{app:lem:tv-shifted} to the encoded shift $\rmC^{(T)}\rve_{j^*}$, whose norm is at most $M_T$, and post-processing through $\rmB^{(T)}$.

\subsection{Concrete Instantiation via Toeplitz Factorization}
\label{subsec:tca-factorization}

To maximize utility, we use the Toeplitz factorization of \citet{fichtenberger2023constant} and \citet{dvijotham2024efficient}, which is near-optimal for continual counting.

We define the $T \times T$ lower-triangular Toeplitz
matrices $\rmC^{(T)}$ (encoder) and $\rmB^{(T)}$ (decoder) by
\begin{equation}
\label{eq:def_toeplitz_matrices}
  \rmC^{(T)}_{i,j} = \rmB^{(T)}_{i,j} :=
  \begin{cases} f_{i-j} & \text{if } i \ge j, \\ 0 & \text{if } i < j. \end{cases}
\end{equation}
where the sequence $(f_k)_{k \ge 0}$ is given by
\begin{equation}
      f_0 = 1, \qquad f_k = \Bigl(1 - \frac{1}{2k}\Bigr)f_{k-1}
  \quad\text{for } k \ge 1
\end{equation} 

The convolution properties of this sequence yields $\rmB^{(T)}\rmC^{(T)}=\rmA^{(T)}$ (Lemma 1 \& 2 \cite{fichtenberger2023constant}).

\begin{lemma}[Sensitivity Bound for the Toeplitz Encoder]
\label{lem:tca-sensitivity}
For the encoder $\rmC^{(T)}$ above,
\[
\|\rmC^{(T)}\rve_j\|_2^2 = \sum_{k=0}^{T-j} f_k^2
\le
\sum_{k=0}^{T-1} f_k^2
\]
which equals $M_T^2$ eq.~(\ref{eq:definition_mt}), as the maximum is attained at $j=1$.
Moreover, $M_T = \mathcal{O}(\sqrt{\log T})$.
\end{lemma}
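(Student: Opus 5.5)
The proof splits into three parts: the column-norm identity, the maximizing column, and the growth rate of $M_T$.

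\textbf{Column norms.} Because $\rmC^{(T)}$ is lower-triangular Toeplitz with entries $f_{i-j}$, its $j$-th column is $(0,\dots,0,f_0,f_1,\dots,f_{T-j})^\top$, with $j-1$ leading zeros. Hence
\[
\|\rmC^{(T)}\rve_j\|_2^2=\sum_{k=0}^{T-j}f_k^2 .
\]
Every term $f_k^2$ is nonnegative; indeed $f_k>0$, since each factor $1-\tfrac1{2k}\in[\tfrac12,1)$. So the partial sum is nondecreasing in its upper limit $T-j$. The maximum over $j\in[T]$ is therefore attained at $j=1$, which gives $M_T^2=\sum_{k=0}^{T-1}f_k^2$.

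\textbf{Closed form for $f_k$.} Unrolling the recursion,
\[
f_k=\prod_{i=1}^k\frac{2i-1}{2i}=\frac{(2k-1)!!}{(2k)!!}=\binom{2k}{k}4^{-k}.
\]

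\textbf{Bounding $f_k^2$ by $1/k$.} I would use the elementary Wallis-type bound $f_k\le 1/\sqrt{\pi k}$, or the weaker $f_k\le 1/\sqrt{2k+1}$. A self-contained proof of the weaker bound goes by induction: show that $f_k^2(2k+1)$ is nonincreasing. The ratio of consecutive terms is
\[
\Bigl(\tfrac{2k+1}{2k+2}\Bigr)^2\frac{2k+3}{2k+1}=\frac{(2k+1)(2k+3)}{(2k+2)^2}\le 1,
\]
by AM--GM. Since the sequence starts at the value $1$ when $k=0$, we get $f_k^2\le 1/(2k+1)$. Alternatively, one can simply cite the bound $f_k\le 1/\sqrt{\pi k}$ from \citet{fichtenberger2023constant}.

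\textbf{Summing.} With $f_0^2=1$ and $f_k^2\le 1/(2k+1)$ for $k\ge1$,
\[
M_T^2\le 1+\sum_{k=1}^{T-1}\frac{1}{2k+1}\le 1+\tfrac12 H_{T-1}\le 1+\tfrac12(1+\log T)=\mathcal O(\log T).
\]
Taking square roots gives $M_T=\mathcal O(\sqrt{\log T})$.

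The only step needing care is the upper bound on $f_k^2$, and the monotonicity argument above handles it directly.
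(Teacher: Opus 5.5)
Your proof is correct. It differs from the paper's in being self-contained rather than citation-based. The paper treats the column-norm identity, and implicitly the location of the maximum at $j=1$, as immediate from the Toeplitz structure. It obtains $M_T=\mathcal{O}(\sqrt{\log T})$ only by appeal to Lemma~2.1 of \citet{dvijotham2024efficient}.

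You instead make explicit that the partial sums $\sum_{k=0}^{T-j}f_k^2$ are monotone in $T-j$. You then derive the growth rate from first principles:
\begin{itemize}
\item the closed form $f_k=\binom{2k}{k}4^{-k}$;
\item the fact that $(2k+1)f_k^2$ is nonincreasing, because $(2k+1)(2k+3)\le(2k+2)^2$;
\item a harmonic-sum estimate giving $M_T^2\le 1+\tfrac12(1+\log T)$.
\end{itemize}

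The citation route mainly buys sharper constants. For example, the Wallis-type bound $f_k\le 1/\sqrt{\pi k}$ would replace your leading coefficient $\tfrac12$ by $\tfrac1\pi$. That difference does not matter here. The mechanism calibrates $\sigma=M_T/(2\kappa_\delta)$ with the exact value of $M_T$, so the bound on $M_T$ only serves the asymptotic claim. For that claim your elementary argument suffices, and it removes the dependence on an external lemma.
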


\begin{proof}
The first identity is immediate from the definition of $\rmC^{(T)}$.  The
$\mathcal{O}(\sqrt{\log T})$ bound follows from the analysis in
\citet{dvijotham2024efficient} (Lemma 2.1).
\end{proof}

Instantiating Algorithm~\ref{alg:mechanism} with this Toeplitz-based counter -- running one independent copy per organization and restarting after each audit -- yields our final mechanism, which satisfies $(0,\delta)$-DP. This factorization was shown to be optimal among lower-triangular Toeplitz factorizations for continual counting~\cite{dvijotham2024efficient}. We use this instantiation throughout the remainder of the paper.

\section{Utility Guarantee}
\label{sec:utility}
We now derive a utility guarantee (Definition~\ref{def:error}) for the above mechanism. To begin, define the noise history of the mechanism.

\begin{definition}[History]\label{def:history}
$\calH_{t-1} := \sigma\bigl(\{\tilde{n}_{c,s} : c \in [C],\, s \le t-1\}\bigr)$ is the $\sigma$-algebra generated by all noisy counter outputs up to time $t-1$. The audit decisions $a_1,\dots,a_{t-1}$ are $\calH_{t-1}$-measurable.
\end{definition}

\subsection{Utility Theorem}
\label{sec:utility-thm}

\begin{definition}[Effective gap]\label{def:eff-gap}
For any realisation of $\calH_{t-1}$ and any challenger $c \ne c^*_t$, the
\emph{effective gap} between true leader $c^*_t$ and challenger $c$ at time $t$
is
\begin{equation}\label{eq:eff-gap}
  \tilde{\Delta}_{t, c}
  \;:=\;
  \underbrace{(n_{t, c^*_t} - n_{t, c})}_{\text{true gap}\ \Delta_{t, c}}
  \;+\;
  \underbrace{(G_{t, c^*_t} - G_{t, c})}_{\text{accumulated past noise difference}}
\end{equation}
where $\Delta_{t, c} = n_{t, c^*_t} - n_{t, c} > 0$ and $G_{t, c}$ is defined
in eq.~(\ref{eq:Gct}).  Note that $\tilde{\Delta}_{t, c}$ is fully determined by
$\calH_{t-1}$ and the data.
\end{definition}

\begin{theorem}[Utility of Algorithm~\ref{alg:mechanism} with Toeplitz Factorization]
\label{thm:utility}
Fix any time $t \ge 1$.  For any realisation $h$ of $\calH_{t-1}$ such that
$c^*_t$ is the \emph{unique} true leader ($\Delta_{t, c} > 0$ for all
$c \ne c^*_t$), the conditional error satisfies
\begin{equation}\label{eq:cond-utility}
  \Pr\bigl[a_t \ne c^*_t \mid \calH_{t-1} = h\bigr]
  \;\le\;
  \sum_{c\,\ne\,c^*_t}
  \Phi\!\left(-\frac{\sqrt{2}\,\kappa_\delta\,\tilde{\Delta}_{t, c}}{M_T}\right),
\end{equation}
where each term is \emph{exact} (equality, not merely an upper bound) for its
corresponding challenger.  

\end{theorem}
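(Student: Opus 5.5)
Conditioned on $\calH_{t-1}=h$, the active counts $n_{t,c}$ and the counter ages $\ell_c$ are fixed, since the restart history is determined by $a_1,\dots,a_{t-1}$. The plan is therefore to write each noisy count at time $t$ as
\[
\tilde n_{t,c} = n_{t,c} + G_{t,c} + \xi_{t,c}.
\]
Here $G_{t,c}$, presumably eq.~(\ref{eq:Gct}), is the part of the decoder output $\bigl(\rmB^{(T)}\rvz_c\bigr)_{\ell_c}$ that depends on the noise coordinates $z_{c,1},\dots,z_{c,\ell_c-1}$. Those coordinates are already determined by the earlier outputs of instance $c$, assuming the lower-triangular Toeplitz decoder has $f_0=1\neq 0$ and so is invertible on prefixes. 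The remaining term is the fresh innovation $\xi_{t,c} = f_0 z_{c,\ell_c} = z_{c,\ell_c}$. This term is independent of $\calH_{t-1}$ and distributed as $\mathcal N(0,\sigma^2)$. A freshly restarted counter has $\ell_c=1$ and $G_{t,c}=0$, so the formula covers it too. Since the $C$ instances use independent noise, the innovations $\{\xi_{t,c}\}_{c\in[C]}$ are i.i.d.\ $\mathcal N(0,\sigma^2)$ given $h$.

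Next comes a union bound. An error $a_t\neq c^*_t$ requires some challenger $c$ with $\tilde n_{t,c}\ge \tilde n_{t,c^*_t}$; ties have probability zero, so tie-breaking is irrelevant. Hence
\[
\Pr[a_t\ne c^*_t\mid h]\le \sum_{c\ne c^*_t}\Pr\bigl[\xi_{t,c}-\xi_{t,c^*_t}\ge \tilde\Delta_{t,c}\mid h\bigr],
\]
and the rearrangement uses Definition~\ref{def:eff-gap} directly. The difference $\xi_{t,c}-\xi_{t,c^*_t}$ is $\mathcal N(0,2\sigma^2)$, so each term equals $\Phi\bigl(-\tilde\Delta_{t,c}/(\sqrt2\,\sigma)\bigr)$ exactly. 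This is the sense in which each term is ``exact'': it is the pairwise event probability, and the only slack comes from the union bound.

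Finally, I substitute $\sigma = M_T/(2\kappa_\delta)$ from Proposition~\ref{prop:base-counter-privacy}:
\[
\frac{\tilde\Delta_{t,c}}{\sqrt2\,\sigma} = \frac{2\kappa_\delta\tilde\Delta_{t,c}}{\sqrt2\, M_T} = \frac{\sqrt2\,\kappa_\delta\tilde\Delta_{t,c}}{M_T},
\]
which gives eq.~(\ref{eq:cond-utility}).

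The main obstacle is the first step: justifying rigorously that, conditioned on $\calH_{t-1}$, the innovation is still exactly $\mathcal N(0,\sigma^2)$ and independent across targets. The concern is that $\calH_{t-1}$ includes adaptive restart decisions that depend on the noise. I would handle this by noting two facts. First, $\calH_{t-1}$ is generated by the noise variables of steps $\le t-1$ (together with the tie-breaking coins) through the triangular map. Second, the time-$t$ innovations are fresh draws: either a new noise coordinate of a continuing instance or the first coordinate of a new instance. Neither enters $\calH_{t-1}$, so independence follows from the independence of the underlying noise. I would also check that the conditioning event ``$c^*_t$ unique'' is $h$-measurable.
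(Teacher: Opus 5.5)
Your proposal is correct and follows the paper's proof essentially step for step. Both use the fresh-noise decomposition $\tilde n_{t,c} = n_{t,c} + G_{t,c} + \zeta^{(t,c)}_{\ell_{t,c}}$ (the paper's Lemma~\ref{app:lem:fresh}), a union bound over challengers, the exact $\mathcal{N}(0,2\sigma^2)$ tail for each pairwise difference, and substitution of $\sigma = M_T/(2\kappa_\delta)$. Your argument that $G_{t,c}$ is $\calH_{t-1}$-measurable, via invertibility of the unit-diagonal triangular decoder on prefixes, and your remark that ties have probability zero, spell out points the paper only asserts.
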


For a proof, see Appendix~\ref{app:subsec:utility}.

\subsection{Remarks}
\label{sec:utility-remarks}

\begin{remark}[Optimality]
\label{rem:tca-opt}
The error bound eq.~(\ref{eq:cond-utility}) depends on the noise calibration only
through $\sigma = M_T/(2\kappa_\delta)$. Since smaller $M_T$ means smaller
$\sigma$ and hence smaller error, and since Proposition~2.2 of
\citet{dvijotham2024efficient} shows the above factorization uniquely minimises
$M_T = \sqrt{\mathrm{MaxErr}(\rmB^{(T)},\rmC^{(T)})}$ over all lower-triangular
Toeplitz factorisations, the above mechanism provides the tightest error bound within this
class at any fixed privacy level $\delta$.
\end{remark}

 


\begin{remark}[Comparison with Randomized Response]
\label{rem:rr-comparison}
Theorem~\ref{thm:RR_tradeoff} shows that any $(0,\delta)$-DP calibration of
randomized response has error at least $\frac{C-1}{C} - \delta$ at every
timestep, independent of the report gap. By contrast, the bound of
Theorem~\ref{thm:utility} vanishes whenever the effective gap satisfies
$\tilde{\Delta}_{t,c}/\sqrt{\log T} \to \infty$, as when gaps grow with
accumulated reports. At any fixed privacy level, randomized response is thus
gap-insensitive while our mechanism's error decays as the gap grows.
\end{remark}


\section{Experiments}
\label{sec:experiments}
We compare \emph{Toeplitz Continual Auditing} (TCA), with Randomized Response \cite{warner1965randomized}, uniformly random auditing, and non-private (greedy) auditing in two simulation studies. All private mechanisms are calibrated to the same per-report $(0, \delta)$-DP guarantee over the relevant horizon; full details can be found in Appendix~\ref{app:sec:exp_details}.

\paragraph{Experiment 1: Static gap sweep.}
Theorem~\ref{thm:utility} predicts that TCA's error should decrease as the effective gap between the true leader and its challengers grows, with noise scale governed by $M_T=\mathcal{O}(\sqrt{\log T})$. To isolate this effect, we consider a single audit decision with one leading organization $c_0$ having active count $n_0=\Delta$ and all other organizations having active count $n_c=0$. We vary $\Delta$ and estimate the mis-selection probability $\Pr[a \neq c_0]$. Figure~\ref{fig:exp1_misselection_vs_gap_c5} shows the expected qualitative pattern: TCA's error decreases rapidly as the gap grows, while randomized response is essentially insensitive to the gap. The advantage increases with the number of audited organizations (Appendix Figures~\ref{fig:misselection_vs_gap_c20}--\ref{fig:misselection_vs_gap_c200}).

\begin{figure*}[t]
    \centering

    \begin{subfigure}[t]{0.48\linewidth}
        \centering
        \includegraphics[width=\linewidth]{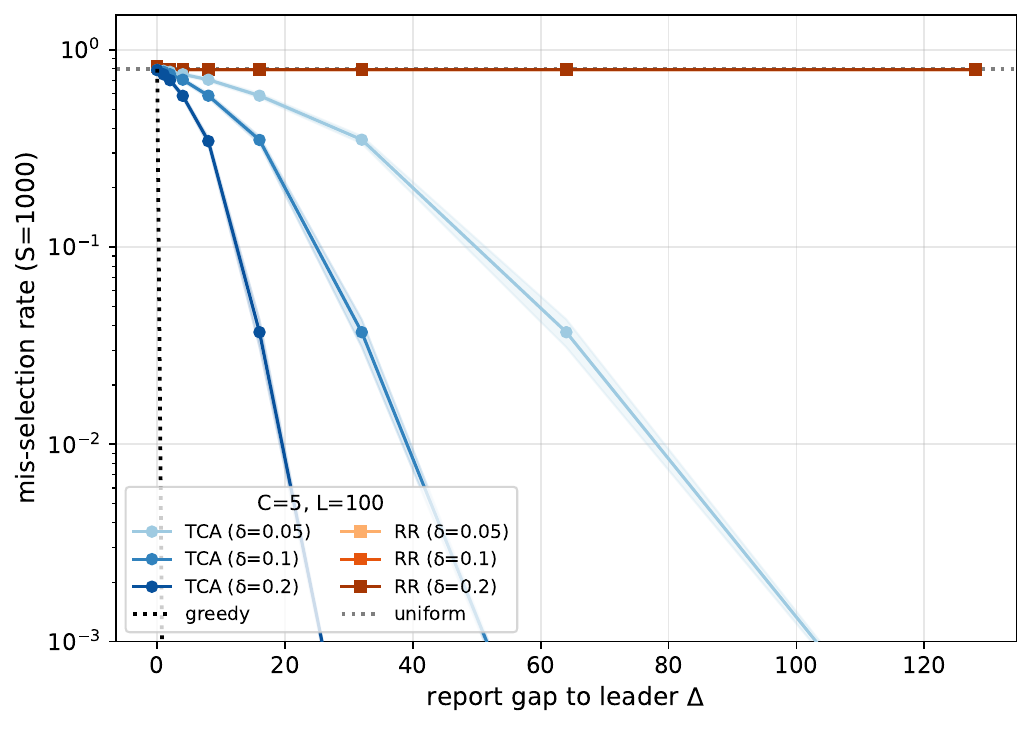}
        \caption{Mis-selection rate as a function of the leading gap.}
        \label{fig:exp1_misselection_vs_gap_c5}
    \end{subfigure}
    \hfill
    \begin{subfigure}[t]{0.51\linewidth}
        \centering
        \includegraphics[width=\linewidth]{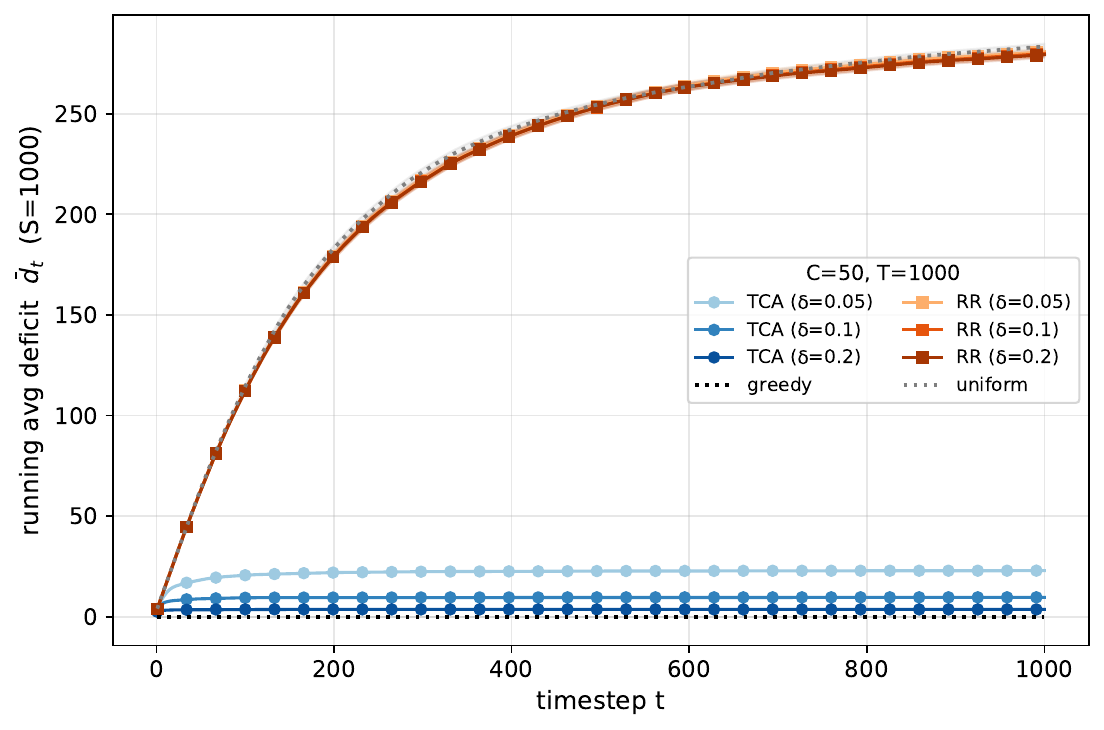}
        \caption{Running average active-count deficit over time.}
        \label{fig:exp2_running_avg_deficit}
    \end{subfigure}

    \caption{\textbf{Utility Simulation.}
    (a) Static gap sweep: TCA's mis-selection rate decreases as the
leader--runner-up gap $\Delta$ grows, while randomized response remains
nearly gap-insensitive. 
(b) Dynamic auditing: TCA incurs substantially lower running active-count
deficit than randomized response and uniform auditing, approaching the
non-private greedy baseline.}
    \label{fig:experiments}
\end{figure*}

\textbf{Experiment 2: Dynamic auditing.} The static experiment isolates a single decision, but real auditing is online: each audit resets the selected organization's active count, so different mechanisms induce different future states. We therefore simulate a streaming report process over a fixed horizon $T$ and measure the \emph{active count deficit}: $d_t^{\mathcal M} := \max_{c} n^{\mathcal M}_{t,c} - n^{\mathcal M}_{t, a_t}$, computed before the reset at time $t$ — the regret of $\gM$'s selection against the best available active count. Figure~\ref{fig:exp2_running_avg_deficit} shows that TCA remains close to the greedy auditor, while randomized response accumulates much larger deficits. 

\section{Limitations}
\label{sec:limitations}
First, the per-whistleblower guarantee degrades linearly in the number of
reports $k$ under group privacy
(Lemma~\ref{app:lem:group_privacy_reports}, Corollary~\ref{app:cor:per_whistleblower}).
Second, the threat model assumes that the auditor is a trusted entity that will not leak the private report counts. While this is a standard threat model assumption in centralized differential privacy, it may be worth strengthening the threat model by omitting a trusted auditor. Mechanisms that satisfy local differential privacy \citep{kasiviswanathan2011can} such as the randomized response approach described in Section~\ref{sec:rand_response} would likely be sufficient to guarantee privacy under this threat model. 
Third, we were unable to obtain any data from a real-world auditing setup. An evaluation with real-world data could expose additional practical considerations that may have been missed in this work.

\section{Related Work}
\label{sec:related}
\paragraph{Whistleblower mechanisms in economics and computer science.}
The economics literature has long studied the design of intervention rules that protect informants from retaliation \citep{chassang2019crime,ortner2018making}, with randomized response \citep{warner1965randomized} supplying the canonical garbling primitive. Empirical auditing work confirms that whistleblower reports causally affect downstream audit intensity \citep{kuang2021whistleblowing}. On the systems side, deployed pipelines such as SecureDrop and academic designs Dissent and Riposte \citep{10.1109/SP.2015.27} target sender-anonymity using mix-nets, DC-nets, or PIR. Recent cryptographic work establishes that anonymous transfer without trusted parties is fundamentally hard \citep{10.1007/978-3-031-22365-5_24, 10.1007/978-3-031-48621-0_1}, motivating our information-theoretic, DP-based approach in which deniability is provided by the auditor's randomized policy rather than by network-level anonymity.

\paragraph{Differential privacy under continual observation.}
Our mechanism builds on the model introduced by \citet{dwork2010differential} and refined by \citet{chan2011private}, which gives polylogarithmic-error counters and top-$k$ extensions. Recent matrix-factorization mechanisms achieve near-optimal constants \citep{fichtenberger2023constant,henzinger2023almost} and admit efficient streaming implementations \citep{dvijotham2024efficient, andersson2025streamingprivatecontinualcounting}. We instantiate our generic counter-and-reset scheme with a Toeplitz factorization to obtain $O(\sqrt{\log T})$ noise.

\paragraph{Private selection and online decision making.} At each audit step we solve a private argmax over noisy counters, a problem whose canonical solutions are the exponential mechanism \cite{10.1109/FOCS.2007.41}, Report-Noisy-Max \cite{dwork2014algorithmic}, and Permute-and-Flip \cite{NEURIPS2020_01e00f2f}. When candidate scores are themselves DP outputs, \citet{10.1145/3313276.3316377} and \citet{cohen2024lower} give general selection frameworks. Closest in structure to our work are differentially private multi-armed bandit algorithms that maintain per-arm tree-counters \cite{Tossou_Dimitrakakis_2016}. However, they differ on three concrete axes: 1.~ their adjacency unit is a single pull or reward (vs. a single report on our transcript); 2.~ their objective is cumulative regret over an exploration horizon (vs. count-leader identification at every step under a fixed $(0, \delta)$ constraint); and 3.~ their feedback model uses stochastic rewards after each pull to update arm estimates (vs. direct report counts with no reward channel from the audit). Bandit regret bounds therefore do not transfer to deniability guarantees, and our utility theorem (Theorem~\ref{thm:utility}) has no analogue in the bandit literature.

\paragraph{Privacy semantics and AI governance auditing.}
Translating $(\epsilon,\delta)$-DP into operational deniability guarantees has recently been unified across re-identification, attribute-inference, and reconstruction risks via $f$-DP \citep{kulynychunifying}; this is also the framework that links DP to GDPR-style ``singling out'' \citep{cohen2020towards}. Empirical auditing of DP claims complements this view \citep{NEURIPS2023_9a6f6e0d}. In AI governance, calls for external scrutiny of frontier systems \citep{anderljung2023frontierairegulationmanaging, Kolt_Anderljung_Barnhart_Brass_Esvelt_Hadfield_Heim_Rodriguez_Sandbrink_Woodside_2024}, responsible reporting frameworks \citep{Kolt_Anderljung_Barnhart_Brass_Esvelt_Hadfield_Heim_Rodriguez_Sandbrink_Woodside_2024}, and end-to-end algorithmic-audit processes \citep{rajiclosing2020} all envision insider information as a key signal. Our work supplies the missing technical primitive: an auditor policy whose randomization gives whistleblowers a quantifiable deniability guarantee even after many rounds of audits---he setting in which randomized response is provably pinned within $\delta$ of uniform auditing at any horizon.

\section{Conclusion}
\label{sec:conclusion}

We formalized an auditing setup for whistleblower anonymity and showed that
the canonical primitive, randomized response, cannot improve on uniform
random auditing by more than $\delta$ at any fixed $(0,\delta)$-level, at
any horizon. We then proposed a generic counter-and-restart mechanism whose
privacy is inherited by post-processing from any $(0,\delta)$-DP
continual-counting mechanism, and instantiated it with the Toeplitz
factorization of \citet{fichtenberger2023constant} and
\citet{dvijotham2024efficient}, yielding per-report $(0,\delta)$-DP with
noise calibrated by $M_T = \mathcal{O}(\sqrt{\log T})$. Our utility theorem
shows that the error decays whenever the effective leader--challenger gap
grows faster than $\sqrt{\log T}$. Natural extensions include tightening
the per-whistleblower guarantee for multiple reports beyond group privacy,
and empirical evaluation on real audit pipelines. While this mechanism is
intended to make reporting channels more credible by limiting what audit
transcripts reveal about individual reports, the guarantee applies only
under the stated threat model and should not be taken to protect against
operational failures such as access-log leaks, non-anonymous reporting
channels, overly narrow audit scopes, or retaliation based on side
information.

\subsection*{Acknowledgments}
This research was enabled by support provided
by Mila -- Québec AI Institute and by École Polytechnique de Montréal. LR acknowledges support from the Engineering and Physical Sciences Research Council
with grant number EP/S021566/1. MJK is supported through the Canada CIFAR AI Chair program and by a grant from the IVADO R10 cluster.

\bibliographystyle{abbrvnat}
\bibliography{refs}

\clearpage

\appendix

\renewcommand{\contentsname}{Contents of Appendix}
\addtocontents{toc}{\protect\setcounter{tocdepth}{2}}
{
 \hypersetup{hidelinks}
 \tableofcontents
}

\newpage


\appendix

\onecolumn

\section{Threat Model Notation}
\label{app:sec:threat_model_notation}


Table~\ref{tab:object-mapping} summarizes how the record-level DP objects of
\citet{kulynychunifying} instantiate in our whistleblower-auditing setting;
see Remark~\ref{rem:kulynych-relation} for the formal correspondence.

\begin{table}[t]
\centering
\caption{Mapping between the record-level DP objects of \citet{kulynychunifying}
and the whistleblower-auditing objects of this paper. The strong adversary's knowledge
of the partial dataset $\bar S$ is encoded in our setting by the adjacency relation,
which fixes everything except the single additional report.}
\label{tab:object-mapping}
\renewcommand{\arraystretch}{1.3}
\begin{tabular}{@{}p{0.46\linewidth} p{0.46\linewidth}@{}}
\toprule
\textbf{Kulynych et al. (record-level DP)} & \textbf{This paper (whistleblower auditing)} \\
\midrule
Dataset $S$
  & Report stream $\rmR_T$ \\
Mechanism $M(S) \to \theta \in \Theta$
  & Mechanism $\gM(\rmR_T) \to \rva_T \in [C]^T$ \\
Privacy unit: one record $z$
  & One report at coordinate $(t^*, c^*)$ \\
Neighbours $S \simeq S'$ (add/remove $z$)
  & Adjacent streams $\rmR_T \sim \rmR_T'$ (one report differs) \\
Strong adversary knows partial dataset $\bar S$
  & Audit target knows both $\rmR_T$ and $\rmR_T'$ (all but the additional report) \\
Prior $z \sim P$; Bernoulli case $z = z_b$, $b \sim \mathrm{Bern}(\pi)$
  & Prior $\pi$ over \{filed, silent\} (bounds hold uniformly in $\pi$) \\
Attack $A : \Theta \to (\text{guess})$; score $\hat b : \Theta \to [0,1]$
  & Decision rule $\varphi : [C]^T \to [0,1]$ \\
Membership test $H_0\!: \theta \sim M(\bar S)$ vs.\ $H_1\!: \theta \sim M(\bar S \cup \{z\})$
  & $H_0\!: \rva_T \sim \gM(\rmR_T')$ (silent) vs.\ $H_1\!: \rva_T \sim \gM(\rmR_T)$ (filed) \\
$\mathrm{adv}_{\mathrm{SMIA}} = \sup_\varphi (\mathrm{TPR} - \mathrm{FPR}) \le \eta$
  & $\mathrm{adv}_{\mathrm{SMIA}} \le \delta$ (Eq.~\ref{eq:prior-free}) \\
Bernoulli-prior success bound $\mathrm{succ} \le 1 - R_f(\pi)$ (\citep{kulynychunifying}, Thm.~D.1)
  & $\mathrm{adv}_\pi(\gM) \le \min\{\pi, 1-\pi\}\,\delta$;\ \ $\mathrm{adv}(\gM) \le \delta/2$ (Thm.~\ref{thm:advantage}) \\
\bottomrule
\end{tabular}
\end{table}

\section{Additional Proofs}
\label{app:sec:additional_proofs}

\begin{proposition}[TV characterization]
\label{app:prop:tv-char}
A mechanism $\gM$ satisfies $(0,\delta)$-DP if and
only if for every $\rmR_T \sim \rmR_T'$,
\[
  d_{\mathrm{TV}}\!\bigl(\gM(\rmR_T),\, \gM(\rmR_T')\bigr) \;\le\; \delta.
\]
\end{proposition}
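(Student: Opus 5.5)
The plan is to use the standard identity $d_{\mathrm{TV}}(P,Q) = \sup_{A \subseteq \gA_T} \bigl(P(A) - Q(A)\bigr)$ for distributions $P,Q$ on the finite set $\gA_T = [C]^T$. Once this is in hand, both directions of the equivalence reduce to comparing the two sides of Definition~\ref{def:DP} with $\epsilon = 0$.

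\textbf{Step 1: the identity.} Write $P = \gM(\rmR_T)$ and $Q = \gM(\rmR_T')$, and take $d_{\mathrm{TV}}(P,Q) = \tfrac12 \sum_{\rva} |P(\rva) - Q(\rva)|$. Let $A^+ = \{\rva : P(\rva) > Q(\rva)\}$. Since $\sum_{\rva} (P(\rva) - Q(\rva)) = 0$, the positive part and the negative part of $P - Q$ have equal mass. Hence $P(A^+) - Q(A^+) = \tfrac12 \sum_{\rva} |P - Q| = d_{\mathrm{TV}}(P,Q)$. Conversely, for any $A$ we have $P(A) - Q(A) \le \sum_{\rva \in A \cap A^+} (P - Q) \le P(A^+) - Q(A^+)$. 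Together these give the supremum form, with the supremum attained at $A^+$.

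\textbf{Step 2 (``if'').} Suppose $d_{\mathrm{TV}}(P,Q) \le \delta$ for every adjacent pair. Then for every $A$,
\[
  P(A) - Q(A) \le d_{\mathrm{TV}}(P,Q) \le \delta,
\]
which is exactly Definition~\ref{def:DP} with $\epsilon = 0$.

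\textbf{Step 3 (``only if'').} Suppose $\gM$ is $(0,\delta)$-DP. Apply Definition~\ref{def:DP} with $A = A^+$ to get $d_{\mathrm{TV}}(P,Q) = P(A^+) - Q(A^+) \le \delta$. Adjacency in Definition~\ref{def:adjacency} is symmetric, so the same bound also holds with the roles of $P$ and $Q$ swapped. This symmetry is not actually needed here, because Step 1 already identifies the total variation distance with a one-sided supremum.

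\textbf{Main obstacle.} The only real content is the identity in Step 1, specifically the cancellation of positive and negative mass. The one point needing care is the convention for $d_{\mathrm{TV}}$: if it is defined directly as the supremum over sets, Step 1 is definitional. Measurability is not an issue because $\gA_T$ is finite.
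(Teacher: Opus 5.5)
Your proof is correct and is essentially the paper's argument: both directions reduce to the fact that $d_{\mathrm{TV}}$ is the supremum over events of the probability difference, applied to Definition~\ref{def:DP} with $\epsilon = 0$. The only difference is bookkeeping. The paper takes $d_{\mathrm{TV}}(P,Q) = \sup_A |P(A) - Q(A)|$ as the definition and invokes symmetry of adjacency to handle the absolute value. You instead start from the half-$\ell_1$ form and show the one-sided supremum is attained at $A^+$, which, as you note, makes the symmetry step unnecessary.
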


\begin{proof}
Definition~\ref{def:DP} requires
$\Pr[\gM(\rmR_T) \in A] - \Pr[\gM(\rmR_T') \in A] \le \delta$ for every measurable $A$
and every adjacent pair. Taking the supremum over $A$, and using that $\sim$
is symmetric, yields exactly $d_{\mathrm{TV}}(\gM(\rmR_T), \gM(\rmR_T')) \le \delta$.
The converse is immediate from the definition of TV distance.
\end{proof}

\begin{lemma}[Group privacy for $k$ reports]
\label{app:lem:group_privacy_reports}
Suppose $\gM$ satisfies $(\epsilon,\delta)$-differential privacy with respect to
the single-report adjacency relation in Definition~\ref{def:adjacency}. Let
$\rmR_T$ and $\rmR'_T$ be report streams that differ in the inclusion or exclusion
of at most $k$ reports. Then, for all measurable $A \subseteq \gA_T$,
\[
  \Pr[\gM(\rmR_T) \in A]
  \leq
  e^{k\epsilon} \Pr[\gM(\rmR'_T) \in A]
  +
  \delta \sum_{i=0}^{k-1} e^{i\epsilon}.
\]
In particular, if $\epsilon = 0$, then $\gM$ satisfies $(0,k\delta)$-DP for
streams differing in at most $k$ reports.
\end{lemma}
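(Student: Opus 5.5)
The plan is the usual hybrid (chain) argument over intermediate report streams, followed by induction on $k$. Let $\rmR_T$ and $\rmR'_T$ differ in the inclusion or exclusion of $m \le k$ reports. Since each report is a unit change in a single coordinate $r_{t,c}$, I will build a chain
\[
\rmR_T = \rmR^{(0)} \sim \rmR^{(1)} \sim \cdots \sim \rmR^{(m)} = \rmR'_T .
\]
The step from $\rmR^{(i-1)}$ to $\rmR^{(i)}$ adds or removes exactly one of the differing reports. Each consecutive pair then differs by exactly $1$ in exactly one coordinate $(t,c)$, so it is adjacent in the sense of Definition~\ref{def:adjacency}. Every $\rmR^{(i)}$ has non-negative entries: each intermediate count lies between $r_{t,c}$ and $r'_{t,c}$, so it stays in $\gR_T$. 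If $m=0$, the claim is trivial.

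Next I prove by induction on $j$ that for every measurable $A$,
\[
\Pr[\gM(\rmR^{(0)})\in A] \le e^{j\epsilon}\Pr[\gM(\rmR^{(j)})\in A] + \delta\sum_{i=0}^{j-1}e^{i\epsilon}.
\]
The base case $j=1$ is Definition~\ref{def:DP}. For the inductive step, apply the DP inequality to the adjacent pair $\rmR^{(j)}\sim\rmR^{(j+1)}$ to get
\[
\Pr[\gM(\rmR^{(j)})\in A] \le e^{\epsilon}\Pr[\gM(\rmR^{(j+1)})\in A]+\delta .
\]
Substituting this into the hypothesis gives the factor $e^{(j+1)\epsilon}$ on the probability term. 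The additive term becomes $e^{j\epsilon}\delta + \delta\sum_{i=0}^{j-1}e^{i\epsilon} = \delta\sum_{i=0}^{j}e^{i\epsilon}$.

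Taking $j=m$ gives the bound with $m$ in place of $k$. This bound is monotone in $m$, since both $e^{m\epsilon}$ and the geometric sum are nondecreasing in $m$ for $\epsilon\ge 0$, and probabilities are non-negative. So it implies the stated bound with $k$. Setting $\epsilon=0$, the sum equals $k$, which gives $(0,k\delta)$-DP.

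I do not expect a real obstacle. The only step needing care is the chain construction: I must check that each single-report change matches Definition~\ref{def:adjacency}'s requirement of a unique coordinate with difference exactly one. I also need to handle the case where several of the $k$ reports fall in the same cell $(t,c)$. That cell then differs by more than one, and I deal with it by spending several chain steps on the same coordinate.
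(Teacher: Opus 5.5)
Your proposal is correct and takes essentially the same route as the paper: a chain of adjacent intermediate streams, with the DP inequality applied step by step to accumulate $e^{k\epsilon}$ and $\delta\sum_{i=0}^{k-1}e^{i\epsilon}$. The one difference is how you handle fewer than $k$ differing reports. The paper pads the chain by repeating the final stream, but identical consecutive streams are not adjacent under Definition~\ref{def:adjacency} (the step still holds trivially). Your monotonicity-in-$m$ argument avoids that small informality.
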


\begin{proof}
Because $\rmR_T$ and $\rmR'_T$ differ in at most $k$ reports, there exists a
sequence of report streams
\[
  \rmR^{(0)}_T, \rmR^{(1)}_T, \ldots, \rmR^{(k)}_T
\]
such that $\rmR^{(0)}_T = \rmR_T$, $\rmR^{(k)}_T = \rmR'_T$, and each consecutive
pair $\rmR^{(j)}_T,\rmR^{(j+1)}_T$ is adjacent in the sense of
Definition~\ref{def:adjacency}. If the two original streams differ in fewer than
$k$ reports, we may repeat the final stream to obtain a sequence of length $k$.

Applying $(\epsilon,\delta)$-DP to the first adjacent pair gives
\[
  \Pr[\gM(\rmR^{(0)}_T) \in A]
  \leq
  e^\epsilon \Pr[\gM(\rmR^{(1)}_T) \in A] + \delta.
\]
Applying the same bound repeatedly yields
\[
\begin{aligned}
  \Pr[\gM(\rmR^{(0)}_T) \in A]
  &\leq e^{2\epsilon}\Pr[\gM(\rmR^{(2)}_T) \in A]
       + \delta(1+e^\epsilon) \\
  &\leq \cdots \\
  &\leq e^{k\epsilon}\Pr[\gM(\rmR^{(k)}_T) \in A]
       + \delta\sum_{i=0}^{k-1} e^{i\epsilon}.
\end{aligned}
\]
Substituting $\rmR^{(0)}_T=\rmR_T$ and $\rmR^{(k)}_T=\rmR'_T$ proves the claim.
For $\epsilon=0$, the sum is $k$, giving $(0,k\delta)$-DP.
\end{proof}

\begin{corollary}[Per-whistleblower advantage]
\label{app:cor:per_whistleblower}
Let $\rmR_T$ and $\rmR'_T$ differ in the inclusion or exclusion of all $k$
reports attributable to a single whistleblower, with no restriction on when
those reports were filed or which audit targets they concern. If
$\mathcal M$ satisfies $(0,\delta)$-differential privacy under single-report
adjacency (Definition~\ref{def:adjacency}), then the baseline-independent
advantage of any adversary in distinguishing $\rmR_T$ from $\rmR'_T$ using the
audit transcript is at most $k\delta/2$.
\end{corollary}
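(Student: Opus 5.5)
The plan is to combine the group-privacy bound of Lemma~\ref{app:lem:group_privacy_reports} with the prior-independent advantage bound of Theorem~\ref{thm:advantage}. The only new ingredient is that Theorem~\ref{thm:advantage} never uses adjacency itself. It uses only the two-point inequality
$\mathrm{TPR}(\varphi) \le \mathrm{FPR}(\varphi) + \delta$
for the specific pair of worlds being distinguished. So the argument will first establish the $k$-report analogue of that inequality and then rerun the proof of Theorem~\ref{thm:advantage} verbatim with $\delta$ replaced by $k\delta$.

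\textbf{Step 1: total-variation bound for the whistleblower's pair of streams.} Let $\rmR_T$ be the stream with all $k$ reports of the whistleblower, and $\rmR'_T$ the stream without them. I will order the $k$ reports arbitrarily; their times and targets are irrelevant. Removing them one at a time produces a chain $\rmR^{(0)}_T=\rmR_T,\dots,\rmR^{(k)}_T=\rmR'_T$. Each consecutive pair differs by exactly one unit in a single coordinate $(t,c)$, so each pair is adjacent in the sense of Definition~\ref{def:adjacency}. Several reports at the same $(t,c)$ are fine here, since each removal lowers that entry by one. Lemma~\ref{app:lem:group_privacy_reports} with $\epsilon=0$ then gives
$\Pr[\gM(\rmR_T)\in A]\le \Pr[\gM(\rmR'_T)\in A]+k\delta$
for all $A$. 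The chain can be walked in either direction, so the symmetric inequality also holds, and Proposition~\ref{app:prop:tv-char}'s argument yields $d_{\mathrm{TV}}(\gM(\rmR_T),\gM(\rmR'_T))\le k\delta$.

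\textbf{Step 2: extend to randomized tests.} For any $\varphi:\gA_T\to[0,1]$,
$\mathrm{TPR}(\varphi)-\mathrm{FPR}(\varphi)=\mathbb{E}_{\gM(\rmR_T)}[\varphi]-\mathbb{E}_{\gM(\rmR'_T)}[\varphi]\le d_{\mathrm{TV}}\le k\delta$.
This is the same reasoning that gave Eq.~\eqref{eq:prior-free} in the threat-model section.

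\textbf{Step 3: the advantage computation.} I will fix $\pi$ and repeat the two cases of the proof of Theorem~\ref{thm:advantage}.

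For $\pi\le 1/2$, the advantage equals $\pi\,\mathrm{TPR}-(1-\pi)\mathrm{FPR}$, which is at most $\pi(\mathrm{TPR}-\mathrm{FPR})$, which is at most $\pi k\delta$.

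For $\pi\ge 1/2$, the advantage is at most $(1-\pi)(\mathrm{TPR}-\mathrm{FPR})$, which is at most $(1-\pi)k\delta$.

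So the advantage is at most $\min\{\pi,1-\pi\}k\delta\le k\delta/2$. Taking the supremum over $\varphi$ and then over $\pi$ gives the baseline-independent bound $k\delta/2$.

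I expect no real obstacle. The one point needing care is Step 1: the chain must consist of genuinely adjacent streams, including when reports coincide in time and target. The sequential-removal construction handles this. The remaining detail is noting that the group bound is symmetric, which the advantage argument does not strictly need, since it only uses $\mathrm{TPR}-\mathrm{FPR}\le k\delta$.
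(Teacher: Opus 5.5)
Your proof is correct and takes the same route as the paper: group privacy (Lemma~\ref{app:lem:group_privacy_reports} with $\epsilon=0$) gives the $k\delta$ bound for the whistleblower's pair of streams, and then the argument of Theorem~\ref{thm:advantage} is applied with $k\delta$ in place of $\delta$. You also note that Theorem~\ref{thm:advantage} only uses the two-point inequality $\mathrm{TPR}(\varphi)\le\mathrm{FPR}(\varphi)+k\delta$, so its proof can be rerun for a non-adjacent pair; the paper's one-line invocation of the theorem leaves this step implicit.
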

\begin{proof}
By group privacy (Lemma~\ref{app:lem:group_privacy_reports}),
$\mathcal M$ satisfies $(0,k\delta)$-DP for streams differing in at most $k$
reports. Applying Theorem~\ref{thm:advantage} with privacy parameter $k\delta$
therefore gives baseline-independent advantage at most $k\delta/2$.
\end{proof}

\begin{remark}[Relation of the advantage to the risk measures of \citet{kulynychunifying}]
\label{rem:kulynych-relation}
The weighted game defined in section~\ref{sec:threat_model} is the Bernoulli-prior instance of the strong
threat model of \citet{kulynychunifying}: their prior over the unknown
record specializes to the weight $\pi$ over
$\{\text{filed}, \text{silent}\}$, and Theorem~\ref{thm:advantage} is
their Theorem~D.1 evaluated at the trade-off function
$f(\alpha) = \max\{0,\, 1-\delta-\alpha\}$ of $(0,\delta)$-DP (stated there for replace-one; the add/remove case is identical). Their
prior-free SMIA advantage,
$\mathrm{adv}_{\mathrm{SMIA}} \coloneqq \sup\,(\mathrm{TPR} -
\mathrm{FPR})$, is bounded by $\delta$ via~\eqref{eq:prior-free}; the two summaries are related by
$\mathrm{adv}(\gM) = \tfrac{1}{2}\,\mathrm{adv}_{\mathrm{SMIA}}$,
with the supremum over $\pi$ attained at $\pi = 1/2$.
\end{remark}

\begin{remark}[Intermediate observation times]
\label{app:rem:prefix}
The transcript-level guarantee implies the same guarantee at every
intermediate time. Fix $t \le T$. Any event
$B \subseteq [C]^t$ concerning the prefix $\rva_{1:t} = (a_1, \dots, a_t)$
is the cylinder event $B \times [C]^{T-t}$ concerning the full
transcript, so for every adjacent pair $\rmR_T \sim \rmR'_T$,
\begin{align*}
\Pr\bigl[\gM(\rmR_T)_{1:t} \in B\bigr]
&= \Pr\bigl[\gM(\rmR_T) \in B \times [C]^{T-t}\bigr] \\
&\le \Pr\bigl[\gM(\rmR'_T) \in B \times [C]^{T-t}\bigr] + \delta \\
&= \Pr\bigl[\gM(\rmR'_T)_{1:t} \in B\bigr] + \delta,
\end{align*}
i.e., the prefix is itself $(0,\delta)$-DP. Moreover, every decision
rule $\psi : [C]^t \to [0,1]$ available to an adversary observing only
$\rva_{1:t}$ induces the transcript rule $\psi(\rva_{t})$ with
identical error rates, so the supremum in
Theorem~\ref{thm:advantage} over this restricted class is dominated by
the supremum over all rules: the advantage bound
$\min\{\pi, 1-\pi\}\,\delta$ holds verbatim at every $t \le T$. We
emphasize that these are not separate guarantees that compose: an
adversary observing the audit decisions as they unfold possesses, at
every moment, a deterministic function of the single release
$\rva_T$, so all intermediate-time bounds are instances of one
transcript-level budget $\delta$, and no accumulation over observation
times occurs. Since our mechanism releases decisions online
(Algorithms~\ref{alg:mechanism}), the prefix
$\rva_{t}$ is exactly the information available to the adversary at
time $t$; the guarantee on the full transcript is therefore the
strongest statement of this form.
\end{remark}

\section{Baselines}

\subsection{Greedy Auditing}
\label{app:subsec:greedy}

\begin{algorithm}[H]
\caption{Greedy Auditing Mechanism, $\gMg$}
\label{app:alg:greedy}
\begin{algorithmic}[1]
\Require Number of audit targets $C$
\State \textbf{Initialize.} For each $c \in [C]$, set $n_c \gets 0$.
\For{$t = 1, 2, 3, \ldots$}
    \State \textbf{Receive} new report counts $\rvr_t \in \mathbb{N}_0^C$.
    \For{each $c \in [C]$}
        \State $n_c \gets n_c + r_{t,c}$ \quad (active reports since last audit of $c$)
    \EndFor
    \State \textbf{Select} $a_t \gets \arg\max_{c \in [C]} n_c$
        \quad (ties broken uniformly at random).
    \State \textbf{Reset} $n_{a_t} \gets 0$.
\EndFor
\end{algorithmic}
\end{algorithm}




\begin{proposition}[No Nontrivial Privacy for Greedy Auditing]
\label{app:prop:greedy_no_privacy}
For $C \ge 2$, the greedy mechanism $\gMg$ in
Algorithm~\ref{app:alg:greedy} does not satisfy $(0,\delta)$-differential
privacy for any $\delta < 1 - 1/C$. 
Moreover, it does not satisfy
pure $\epsilon$-differential privacy for any finite $\epsilon$.
\end{proposition}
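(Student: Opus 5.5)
The plan is to exhibit a single adjacent pair of report streams on which the greedy transcript distributions are far apart. Total variation is then large and, separately, some event has positive probability under one stream and zero under the other.

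By Proposition~\ref{app:prop:tv-char}, it suffices for the first claim to find $\rmR_T \sim \rmR'_T$ with $d_{\mathrm{TV}}(\gMg(\rmR_T),\gMg(\rmR'_T)) \ge 1-1/C$. Since the transcript prefix is a post-processing of the full transcript (Remark~\ref{app:rem:prefix}), it is enough to look at the first decision $a_1$. Take $\rmR'_T$ to be the all-zero stream and let $\rmR_T$ differ only by $r_{1,1}=1$; these are adjacent by Definition~\ref{def:adjacency}.

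Under $\rmR_T$, the active counts at $t=1$ are $n_1=1$ and $n_c=0$ for $c\neq 1$. The maximizer is therefore unique and $\Pr[a_1=1]=1$. Under $\rmR'_T$, all counts are tied at zero, and uniform tie-breaking gives $\Pr[a_1=1]=1/C$.

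Taking the event $A=\{\rva_T : a_1=1\}$ gives
\[
\Pr[\gMg(\rmR_T)\in A]-\Pr[\gMg(\rmR'_T)\in A]=1-\tfrac1C .
\]
So $(0,\delta)$-DP fails for every $\delta<1-1/C$.

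For pure DP, I would use the same pair with the event $B=\{\rva_T: a_1=2\}$ (valid since $C\ge 2$). Then $\Pr[\gMg(\rmR_T)\in B]=0$ while $\Pr[\gMg(\rmR'_T)\in B]=1/C>0$. Applying Definition~\ref{def:DP} with the roles of the two adjacent streams swapped (adjacency is symmetric) gives $1/C\le e^{\epsilon}\cdot 0$, which no finite $\epsilon$ satisfies.

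The only potential subtlety is bookkeeping. The horizon $T\ge1$ is arbitrary, and later decisions do not matter because $A$ and $B$ are cylinder events on $a_1$. No step is a genuine obstacle; the main care point is stating the tie-breaking convention of Algorithm~\ref{app:alg:greedy} explicitly so that $\Pr[a_1=1]=1/C$ under the zero stream.
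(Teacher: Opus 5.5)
Your proof is correct and is essentially the paper's argument. The paper uses the same adjacent pair (a step at $t=1$ where all targets are tied, versus one extra report at $t=1$, placed on target $2$ rather than target $1$), a first-decision event showing $1 \le 1/C + \delta$, and a first-decision event of probability zero under the report-containing stream to rule out pure DP via symmetry of adjacency; the only differences are the relabeled targets.
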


\begin{proof}
Consider a horizon containing a first audit decision, and let $C \ge 2$.
Let $\rmR'_T$ have no reports at time $1$, so that all audit targets are tied.
Let $\rmR_T$ be identical to $\rmR'_T$ except for one additional report for
target $2$ at time $1$. Then $\rmR_T \sim \rmR'_T$ by
Definition~\ref{def:adjacency}.

Under $\rmR_T$, target $2$ is the unique maximizer, so
\[
\Pr[\gMg(\rmR_T)_1 = 2] = 1.
\]
Under $\rmR'_T$, all $C$ targets are tied and ties are broken uniformly, so
\[
\Pr[\gMg(\rmR'_T)_1 = 2] = \frac{1}{C}.
\]
For the event $S := \{\rva : a_1 = 2\}$, $(0,\delta)$-differential privacy
therefore requires
\[
1 \leq \frac{1}{C} + \delta,
\]
and hence $\delta \geq 1 - 1/C$. Thus no nontrivial
$(0,\delta)$ guarantee with $\delta < 1 - 1/C$ is possible.

Moreover, pure $\epsilon$-differential privacy also fails for every finite
$\epsilon$. Let $S' := \{\rva : a_1 = 1\}$. Under $\rmR'_T$,
$\Pr[\gMg(\rmR'_T) \in S'] = 1/C$, whereas under $\rmR_T$,
$\Pr[\gMg(\rmR_T) \in S'] = 0$. Since the adjacency relation is symmetric,
pure $\epsilon$-DP would require
\[
\frac{1}{C} \leq e^\epsilon \cdot 0,
\]
which is impossible for finite $\epsilon$.
\end{proof}

\begin{proposition}[Perfect Utility of Greedy Auditing]
\label{app:prop:greedy_utility}
Let $n_{t, c}$ denote the active count of audit target $c$ at time $t$ as in Definition~\ref{def:active-counts}. The greedy mechanism in Algorithm~\ref{app:alg:greedy} satisfies:
\begin{enumerate}
    \item $a_t \in \arg\max_{c \in [C]} n_{t, c}$ with probability $1$.
    \item $\Pr[n_{t, a_t} = 0 \mid \exists c : n_{t, c} > 0] = 0$.
    \item If $c^\ast$ is the unique maximizer of active counts at time $t^\ast$,
    then $a_{t^\ast}=c^\ast$ with probability $1$.
\end{enumerate}
\end{proposition}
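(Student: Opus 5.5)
The plan is to argue directly from the definition of Algorithm~\ref{app:alg:greedy}, establishing first the invariant that its internal variables $n_c$ coincide with the active counts of Definition~\ref{def:active-counts}. I will prove this by induction on $t$, holding it at the moment just before the selection step. At $t=1$, every $n_c$ starts at $0$ and then receives $r_{1,c}$. Since $\tau_c(1)=0$, we get $n_c = r_{1,c} = n_{1,c}$. For the inductive step, fix a target $c$ and split into two cases. If $a_{t-1}\neq c$, then $\tau_c(t)=\tau_c(t-1)$ and the variable simply gains $r_{t,c}$, which matches the sum in the definition. If $a_{t-1}=c$, then the variable was reset to $0$ and then gains $r_{t,c}$. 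In that case $\tau_c(t)=t-1$, so $n_{t,c}=r_{t,c}$ as well. This bookkeeping is the only nontrivial step. The subtlety is the timing convention: the reset happens after selection, and the reports at time $t$ are counted before selection. I expect this alignment to be the main thing to check carefully.

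Given the invariant, item 1 is immediate. The selection line picks an element of $\arg\max_c n_c = \arg\max_c n_{t,c}$ deterministically up to tie-breaking, and tie-breaking always stays inside the argmax set. So $a_t\in\arg\max_c n_{t,c}$ almost surely.

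Item 2 follows from item 1. On the event $\{\exists c: n_{t,c}>0\}$, we have $\max_c n_{t,c}>0$, and therefore $n_{t,a_t}=\max_c n_{t,c}>0$ with probability one. The conditional probability of $n_{t,a_t}=0$ is then $0$. I will note that if this event has probability zero, the statement is read as vacuous or conventionally zero. This is the only measure-theoretic caveat, since the mechanism's randomness only affects tie-breaking.

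Item 3 is the special case of item 1 where the argmax set is the singleton $\{c^*\}$. Uniform tie-breaking over a single element returns $c^*$ with probability one.
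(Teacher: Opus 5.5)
Your proposal is correct and takes the same route as the paper, whose proof consists only of the remark that the result ``follows by construction.'' Your induction showing that the algorithm's internal counters equal the active counts of Definition~\ref{def:active-counts} (with reports at time $t$ added before selection and the reset applied after) spells out the bookkeeping the paper leaves implicit. Your caveat that the conditional probability in item 2 is undefined when the conditioning event has probability zero is also a reasonable precision to add.
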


The proof follows by construction.

\subsection{Uniformly Random Auditing}
\label{app:sec:uniform}

\begin{algorithm}[H]
\caption{Uniformly Random Auditing, $\gMuni$}
\label{app:alg:uniformly_random}
\begin{algorithmic}[1]
\Require Number of audit targets $C$
\For{$t = 1, 2, 3, \ldots$}
    \State \textbf{Receive} new report counts $\rvr_t \in \N_0^C$ \quad (ignored)
    \State  \textbf{Select} $a_t \sim \text{Uniform}([C])$
\EndFor
\end{algorithmic}
\end{algorithm}

\begin{proposition}[Perfect Privacy of Uniformly Random Auditing]
\label{app:prop:uniform_privacy}
The uniformly random mechanism $\gMuni$ in Algorithm~\ref{app:alg:uniformly_random} satisfies $(0,0)$-differential privacy: for all
streams $\rmR_T,\rmR_T'$ of any length (adjacent or not) and all measurable $S$,
\[
\Pr[\gMuni(\rmR_T) \in S] = \Pr[\gMuni(\rmR_T') \in S].
\]
\end{proposition}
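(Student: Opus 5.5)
The mechanism $\gMuni$ never reads its input, so the output distribution is the same for every stream; I will argue this directly and then specialize to Definition~\ref{def:DP}.

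First I will identify the output distribution. In Algorithm~\ref{app:alg:uniformly_random}, each decision $a_t$ is drawn from $\mathrm{Uniform}([C])$ with fresh randomness that is independent of $\rvr_t$ and of all earlier draws. The received counts are explicitly discarded. So for any stream $\rmR_T$, the transcript $\gMuni(\rmR_T)=(a_1,\dots,a_T)$ has the product law $\mathrm{Uniform}([C])^{\otimes T}$. Concretely,
\[
\Pr[\gMuni(\rmR_T)=\rva_T]=C^{-T}\quad\text{for every } \rva_T\in[C]^T,
\]
and this does not depend on $\rmR_T$.

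Second, for any measurable $S\subseteq\gA_T$ (every subset is measurable, since $\gA_T=[C]^T$ is finite), I sum over the elements of $S$:
\[
\Pr[\gMuni(\rmR_T)\in S]=|S|\,C^{-T}=\Pr[\gMuni(\rmR'_T)\in S].
\]
This holds for any two streams of the same length, adjacent or not. For the phrase ``of any length,'' I read the statement as comparing transcripts over a common horizon. Even if the streams have different lengths, each prefix still has the same uniform law, so the equality holds on any common output space.

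Third, the defining inequality of $(\epsilon,\delta)$-DP with $\epsilon=0$, $\delta=0$ then holds with equality. Equivalently, by Proposition~\ref{app:prop:tv-char}, the total variation distance between the two output laws is $0$.

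There is no real obstacle here. The only point that needs care is checking that the randomness is independent of the input, which is immediate from the algorithm since the counts are never used.
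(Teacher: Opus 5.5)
Your proof is correct and takes the same approach as the paper: $\gMuni$ discards its input, so the transcript has the law $\mathrm{Uniform}([C])^{\otimes T}$ for every stream, and the two output distributions coincide. Your explicit computation $\Pr[\gMuni(\rmR_T)\in S]=|S|\,C^{-T}$ spells out what the paper states in words.
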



\begin{proof}
The output distribution is independent of the report stream. At every timestep,
the mechanism samples uniformly from $[C]$, regardless of the input. Therefore,
for any two streams $\rmR_T,\rmR_T'$, the induced distributions over audit transcripts
are identical.
\end{proof}


\begin{proposition}[Utility of Uniformly Random Auditing]
\label{app:prop:random_utility}
Let $n_{t, c}$ denote the active count of audit target $c$ at time $t$ as in Definition~\ref{def:active-counts}. At any time $t$:
\begin{enumerate}
    \item $\Pr[a_t = c] = 1/C$ for all $c \in [C]$, regardless of the active counts.
    \item Conditional on the decision prefix $\rva_{t-1}$, which fixes the
    active counts and hence the maximizer set
    $C_t^* = \arg\max_{c \in [C]} n_{t, c}$ (Definition~\ref{def:error}),
    \[
      \Pr\bigl[a_t \in C_t^* \,\big\vert\, \rva_{1:t-1}\bigr]
      = \frac{|C_t^*|}{C}.
    \]
\end{enumerate}
\end{proposition}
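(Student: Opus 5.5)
The argument is direct, because Algorithm~\ref{app:alg:uniformly_random} never reads the report stream. I would handle the two items in order.

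For item~1, the selection step draws $a_t \sim \mathrm{Uniform}([C])$ using fresh randomness that is independent of everything else. Hence $\Pr[a_t = c] = 1/C$ for every $c \in [C]$, whatever the active counts are. The same independence gives a slightly stronger fact: $a_t$ is independent of the prefix $\rva_{t-1}$ and of the stream $\rmR_T$, so $\Pr[a_t = c \mid \rva_{t-1}] = 1/C$ almost surely.

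For item~2, I would first observe from Definition~\ref{def:active-counts} that, for a fixed report stream, each $n_{t,c}$ is a deterministic function of $\rmR_t$ and $\rva_{t-1}$. The dependence on $\rva_{t-1}$ enters only through $\tau_c(t)$. Therefore the maximizer set $C_t^* = \argmax_{c} n_{t,c}$ is $\sigma(\rva_{t-1})$-measurable. Conditioning on $\rva_{t-1}$ thus fixes $C_t^*$, while $a_t$ remains uniform by the independence established above. This gives
\[
  \Pr\bigl[a_t \in C_t^* \,\big\vert\, \rva_{1:t-1}\bigr]
  = \sum_{c \in C_t^*} \Pr[a_t = c \mid \rva_{1:t-1}]
  = \frac{|C_t^*|}{C}.
\]

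The only step needing any care is the measurability-and-independence argument: I must check that $a_t$ is independent of $\rva_{t-1}$ while $C_t^*$ depends only on $\rva_{t-1}$. Both facts follow from the sampling being fresh at each step and the counts being deterministic given the prefix. No earlier result is needed beyond Definition~\ref{def:active-counts}.
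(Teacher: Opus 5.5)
Your proposal is correct and matches the paper's own argument: conditioning on $\rva_{1:t-1}$ fixes the active counts and hence $C_t^*$, while $a_t$ is a fresh uniform draw independent of the past, so both identities follow by counting. Your explicit observation that $C_t^*$ is $\sigma(\rva_{t-1})$-measurable for a fixed input stream spells out what the paper's one-line proof leaves implicit.
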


\begin{proof}
Items 1 and 2: conditional on the decision prefix $\rva_{1:t-1}$, the active
counts and hence $C_t^*$ are fixed, while $a_t \sim \mathrm{Uniform}([C])$
is drawn independently of the past; both identities follow by counting.
Item 3: the events $\{a_t = c^\ast\}$ for $t \ge t^\ast$ are independent,
each of probability $1/C$, and independent of the report stream. Hence
$t_{\mathrm{exit}} - t^\ast + 1$, the number of steps until $c^\ast$ is
first selected, is geometrically distributed with success probability
$1/C$, and its expectation is $C$.
\end{proof}

\subsection{Randomized Response}
\label{app:subsec:randomized_response}


\begin{algorithm}[t]
\caption{Randomized Response Auditing, $\gMrr$}
\label{app:alg:randomized_response}
\begin{algorithmic}[1]
\Require Number of audit targets $C$; horizon $T$; random-selection probability 
         $p_{\mathrm{rand}} \in (0,1]$
\State \textbf{Initialize.} For each $c \in [C]$, set the pending counts $n_c \gets 0$.
\For{$t = 1, 2, \ldots, T$}
    \For{each $c \in [C]$}
        \State \textbf{Receive} new report counts $r_{t,c} \in \N_0$
        \State $n_c \gets n_c + r_{t,c}$
    \EndFor
    \State Sample $X \sim \mathrm{Bernoulli}(p_{\mathrm{rand}})$
    \If{$X = 1$}
        \State $a_t \sim \mathrm{Uniform}(\{1,\ldots,C\})$ \Comment{explore}
    \Else
        \State $a_t \gets \argmax_{c \in [C]} n_c$ (ties broken uniformly at random) \Comment{exploit}
    \EndIf
    \State \textbf{Select} $a_t$
    \State \textbf{Reset} $n_{a_t} \gets 0$
\EndFor
\end{algorithmic}
\end{algorithm}

We first show the following sufficient condition for $(0, \delta)$-DP:

\begin{proposition}[Privacy of Randomized Response]
\label{app:prop:RR_privacy}
Fix a horizon $T \in \mathbb{N}$. Algorithm~\ref{app:alg:randomized_response}
satisfies $(0, \delta)$-differential privacy at horizon $T$ for
$\delta := 1 - p_{\mathrm{rand}}^T$. This is an \emph{upper bound} on the privacy loss.
\end{proposition}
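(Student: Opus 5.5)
We prove the bound by a coupling argument on the Bernoulli coins. Fix adjacent streams $\rmR_T \sim \rmR'_T$ and run Algorithm~\ref{app:alg:randomized_response} on both with shared randomness. At each step $t$ we use the same exploration coin $X_t$, the same uniform draw $U_t \in [C]$ for the explore branch, and the same tie-breaking randomness for the exploit branch. Let $E$ be the event that $X_t = 1$ for every $t \in [T]$, so that every step explores. Then $\Pr[E] = p_{\mathrm{rand}}^T$ by independence of the coins.

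On $E$, both runs output $a_t = U_t$ for all $t$, so the transcripts coincide irrespective of the report streams. The pending counts $n_c$ are then irrelevant to the output, even though they may differ between the two runs.

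We then apply the coupling inequality. For any measurable $A \subseteq \gA_T$,
\[
\Pr[\gMrr(\rmR_T)\in A] - \Pr[\gMrr(\rmR'_T)\in A] \le \Pr[\rva_T \neq \rva'_T] \le \Pr[E^c] = 1 - p_{\mathrm{rand}}^T .
\]
The same bound holds with the roles of the two streams swapped. This gives total variation distance at most $1-p_{\mathrm{rand}}^T$, and Proposition~\ref{app:prop:tv-char} turns this into $(0,1-p_{\mathrm{rand}}^T)$-DP.

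The one point that needs care is that the coupling is a valid joint construction. Each marginal must have the correct law: the coins and uniform draws are i.i.d. and independent of the input, and the exploit branch is a deterministic function of the counts together with independent tie-breaking randomness. Hence each coupled run is distributed exactly as the algorithm. The argument does not need the runs to agree off $E$.

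Finally, the statement calls this "an upper bound on the privacy loss", meaning that $\delta$ need not be tight. Indeed, coupling only up to the first exploit step at which the two runs could differ would give a smaller value. We do not need this refinement here.
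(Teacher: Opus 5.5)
Your proposal is correct and is essentially the paper's argument. Both proofs rest on the all-explore event $E$, with $\Pr[E]=p_{\mathrm{rand}}^T$, on which the transcript does not depend on the data. You phrase this as a shared-randomness coupling giving $\Pr[\rva_T\neq\rva'_T]\le\Pr[E^c]$. The paper instead writes the equivalent mixture decomposition $p_{\mathrm{rand}}^T P_0(A)+(1-p_{\mathrm{rand}}^T)Q_T(A)$ and bounds the conditional gap $Q_T(A)-Q_T'(A)$ by $1$.
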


\begin{proof}
We show that, for all adjacent report streams
$\mathbf{R}_T \sim \mathbf{R}'_T$ and all measurable
$A \subseteq \gA_T$,
\begin{equation}
\begin{split}
  \Pr[\gMrr(\mathbf{R}_T) \in A]
 \le
  \Pr[\gMrr(\mathbf{R}'_T) \in A] + 1 - p_{\mathrm{rand}}^T .
\end{split}
\label{eq:rr-privacy-goal}
\end{equation}

Let $U_1,\dots,U_T \overset{\mathrm{iid}}{\sim}
\mathrm{Bernoulli}(p_{\mathrm{rand}})$ be the random coins drawn by the
algorithm, independently of the report stream. Define the
\emph{full-exploration event}
\[
  E := \{U_t = 1 \text{ for all } t \in [T]\},
  \qquad
  \Pr[E] = p_{\mathrm{rand}}^T .
\]

\paragraph{Step 1: The output is data-free on $E$.}
On $E$, every round satisfies $U_t=1$, so
Algorithm~\ref{app:alg:randomized_response} always explores and outputs
$a_t \sim \mathrm{Uniform}(\{1,\dots,C\})$ independently of the report
counts. Thus the conditional output distribution is the same under
$\mathbf{R}_T$ and $\mathbf{R}'_T$. Hence, for any $A$,
\begin{equation}
\begin{split}
  \Pr[\gMrr(\mathbf{R}_T)\in A \mid E]
  &=
  \Pr[\gMrr(\mathbf{R}'_T)\in A \mid E] \\
  &:= P_0(A).
\end{split}
\label{eq:same-on-E}
\end{equation}

\paragraph{Step 2: Total-probability decomposition.}
For compactness, define
\[
  Q_T(A)
  :=
  \Pr[\gMrr(\mathbf{R}_T)\in A \mid \neg E],
\]
and
\[
  Q_T'(A)
  :=
  \Pr[\gMrr(\mathbf{R}'_T)\in A \mid \neg E].
\]
Conditioning on whether $E$ holds, and using eq.~(\ref{eq:same-on-E}), gives
\begin{align}
  \Pr[\gMrr(\mathbf{R}_T)\in A]
  &=
  p_{\mathrm{rand}}^T P_0(A)
  +
  (1-p_{\mathrm{rand}}^T) Q_T(A),
  \label{eq:rr-decomp-R}
  \\
  \Pr[\gMrr(\mathbf{R}'_T)\in A]
  &=
  p_{\mathrm{rand}}^T P_0(A)
  +
  (1-p_{\mathrm{rand}}^T) Q_T'(A).
  \label{eq:rr-decomp-Rprime}
\end{align}
Subtracting eq.~(\ref{eq:rr-decomp-Rprime}) from eq.~(\ref{eq:rr-decomp-R})
yields the exact equality
\begin{equation}
\begin{split}
&\Pr[\gMrr(\mathbf{R}_T)\in A]
 -
 \Pr[\gMrr(\mathbf{R}'_T)\in A]  =
(1-p_{\mathrm{rand}}^T)\Delta_T(A),
\end{split}
\label{eq:exact-decomp}
\end{equation}
where $\Delta_T(A)\coloneqq
  Q_T(A) - Q_T'(A)$.

\paragraph{Step 3: Bounding the conditional gap.}
Since $Q_T(A)$ and $Q_T'(A)$ are probabilities,
\[
  \Delta_T(A)
  =
  Q_T(A) - Q_T'(A)
  \le 1.
\]

\paragraph{Conclusion.}
Combining eq.~(\ref{eq:exact-decomp}) with Step~3 gives
\[
  \Pr[\gMrr(\mathbf{R}_T)\in A]
  \le
  \Pr[\gMrr(\mathbf{R}'_T)\in A]
  + 1 - p_{\mathrm{rand}}^T .
\]
This holds for every $\mathbf{R}_T\sim\mathbf{R}'_T$ and every
$A$, so Algorithm~\ref{app:alg:randomized_response} satisfies
$(0,\,1-p_{\mathrm{rand}}^T)$-DP at horizon $T$.
\end{proof}

We further want to prove the following statement on the error bound.



\begin{proposition}[One-step error of randomized response]
\label{app:prop:RR_error}
Fix a horizon $T \in \mathbb{N}$ and privacy parameter $\delta \in [0,1)$.
A sufficient privacy calibration from Proposition~\ref{app:prop:RR_privacy} is
\[
  p_{\mathrm{rand}} = (1-\delta)^{1/T}.
\]
Fix any time $t \le T$ and condition on any pre-decision history
$\mathcal H_{t-1}=h$. Let $n_{t, c}$ denote the active count of audit target
$c$ at time $t$ (Definition~\ref{def:active-counts}). If $c_t^*$ is the unique
maximizer of the active counts, i.e.
\[
  c_t^* = \arg\max_{c\in[C]} n_{t, c},
\]
then Algorithm~\ref{app:alg:randomized_response} satisfies
\[
  \Pr[a_t \neq c_t^* \mid \mathcal H_{t-1}=h]
  =
  \frac{C-1}{C}\,p_{\mathrm{rand}}.
\]
In particular, under the above sufficient privacy calibration,
\[
  \Pr[a_t \neq c_t^* \mid \mathcal H_{t-1}=h]
  =
  \frac{C-1}{C}(1-\delta)^{1/T}.
\]
\end{proposition}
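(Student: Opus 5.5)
The plan is to make the one-step structure of Algorithm~\ref{app:alg:randomized_response} explicit and compute the conditional law of $a_t$ given the history. First, fix $t \le T$ and a pre-decision history $\mathcal H_{t-1}=h$. Here $h$ consists of the report prefix together with the realized decisions $a_1,\dots,a_{t-1}$, and hence of all past coins. The quantity we need to pin down is the pending vector $(n_c)_{c\in[C]}$ that the algorithm holds after receiving $\rvr_t$.

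\textbf{Step 1.} Show that this internal vector equals the active counts $n_{t,c}$ of Definition~\ref{def:active-counts}. I would argue by induction on $t$. The algorithm adds $r_{t,c}$ each round and resets $n_{a_s}\gets 0$ right after each selection. So the pending count for $c$ at time $t$ is $\sum_{s=\tau_c(t)+1}^{t} r_{s,c}$, which matches the definition of $\tau_c(t)$ as the last audit of $c$ before $t$. It follows that, on $\{\mathcal H_{t-1}=h\}$, the maximizer $c_t^*$ is a deterministic, unique target.

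\textbf{Step 2.} Use that the round-$t$ coin $X_t\sim\mathrm{Bernoulli}(p_{\mathrm{rand}})$ and the uniform draw are fresh randomness, independent of $h$. Splitting on $X_t$ then gives:
\begin{itemize}
\item If $X_t=0$, the algorithm exploits. Since the maximizer is unique, the tie-breaking rule is inactive and $a_t=c_t^*$ with probability one.
\item If $X_t=1$, then $a_t\sim\mathrm{Uniform}([C])$, so $\Pr[a_t\neq c_t^*]=(C-1)/C$.
\end{itemize}
Combining the two cases by total probability,
\[
\Pr[a_t\ne c_t^*\mid \mathcal H_{t-1}=h]=(1-p_{\mathrm{rand}})\cdot 0+p_{\mathrm{rand}}\cdot\tfrac{C-1}{C},
\]
which is the claimed identity.

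\textbf{Step 3.} Verify the calibration. Setting $p_{\mathrm{rand}}=(1-\delta)^{1/T}$ gives $1-p_{\mathrm{rand}}^T=\delta$, so Proposition~\ref{app:prop:RR_privacy} yields $(0,\delta)$-DP. Since $\delta\in[0,1)$, we have $p_{\mathrm{rand}}\in(0,1]$, which is admissible for Algorithm~\ref{app:alg:randomized_response}. Substituting this value into Step 2 gives the final display.

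The only delicate point is the conditioning. $\mathcal H_{t-1}$ was defined (Definition~\ref{def:history}) for the counter mechanism, so for randomized response I would take it to be the $\sigma$-algebra generated by the past coins and decisions. Under that reading, Step 1 identifies the pending counts with $n_{t,c}$, and the independence of $X_t$ and the uniform draw from this $\sigma$-algebra is immediate from the algorithm's sampling. Everything else is routine.
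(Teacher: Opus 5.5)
Your proposal is correct and follows the paper's proof: condition on $\mathcal H_{t-1}=h$, split on the round-$t$ explore/exploit coin by total probability, get error $(C-1)/C$ on the explore branch and $0$ on the exploit branch because the maximizer is unique, then substitute $p_{\mathrm{rand}}=(1-\delta)^{1/T}$. Your two extra checks make explicit what the paper takes for granted: that the algorithm's internal pending counts coincide with the active counts of Definition~\ref{def:active-counts}, and that this calibration lies in $(0,1]$ and gives $1-p_{\mathrm{rand}}^T=\delta$.
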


\begin{remark}[Why this parameterisation?]
By Proposition~\ref{app:prop:RR_privacy}, Algorithm~\ref{app:alg:randomized_response}
satisfies $(0,\delta)$-DP at horizon $T$ whenever
\[
  1 - p_{\mathrm{rand}}^T \le \delta,
  \qquad\text{equivalently}\qquad
  p_{\mathrm{rand}} \ge (1-\delta)^{1/T}.
\]
Thus the smallest exploration probability certified by
Proposition~\ref{app:prop:RR_privacy} is
\[
  p_{\mathrm{rand}} = (1-\delta)^{1/T}.
\]
Since the one-step error in Proposition~\ref{app:prop:RR_error} increases linearly
with $p_{\mathrm{rand}}$, this is the error-minimising choice within this
sufficient privacy calibration. For fixed $\delta$, however,
$(1-\delta)^{1/T}\to 1$ as $T\to\infty$, so the mechanism approaches uniformly
random auditing over long horizons.
\end{remark}

\begin{proof}[Proof of Proposition~\ref{app:prop:RR_error}]
Fix a time $t \le T$ and condition on a pre-decision history
$\mathcal H_{t-1}=h$ such that the active counts
$\{n_{t, c}\}_{c\in[C]}$ are fixed and have a unique maximizer $c_t^*$.
Let $U_t \sim \mathrm{Bernoulli}(p_{\mathrm{rand}})$ be the random coin drawn
by the algorithm at round $t$, independently of the report stream and of the
past randomness.

By the law of total probability,
\begin{equation}
\begin{split}
  \Pr[a_t \neq c_t^* \mid \mathcal H_{t-1}=h]
  &=
  \Pr[U_t=1]\Pr[a_t \neq c_t^* \mid U_t=1,\mathcal H_{t-1}=h] \\
  &\quad+
  \Pr[U_t=0]\Pr[a_t \neq c_t^* \mid U_t=0,\mathcal H_{t-1}=h].
\end{split}
\label{app:eq:rr-error-total-prob}
\end{equation}

If $U_t=1$, the algorithm explores and draws
$a_t \sim \mathrm{Uniform}([C])$. Since exactly one audit target equals
$c_t^*$,
\[
  \Pr[a_t \neq c_t^* \mid U_t=1,\mathcal H_{t-1}=h]
  =
  \frac{C-1}{C}.
\]
If $U_t=0$, the algorithm exploits and chooses
\[
  a_t \in \arg\max_{c\in[C]} n_{t, c}.
\]
By assumption, $c_t^*$ is the unique maximizer, so
\[
  \Pr[a_t \neq c_t^* \mid U_t=0,\mathcal H_{t-1}=h] = 0.
\]
Substituting into eq.~(\ref{app:eq:rr-error-total-prob}) gives
\[
  \Pr[a_t \neq c_t^* \mid \mathcal H_{t-1}=h]
  =
  p_{\mathrm{rand}}\frac{C-1}{C}.
\]
Finally, using the sufficient privacy calibration
$p_{\mathrm{rand}}=(1-\delta)^{1/T}$ gives
\[
  \Pr[a_t \neq c_t^* \mid \mathcal H_{t-1}=h]
  =
  \frac{C-1}{C}(1-\delta)^{1/T}.
\]
\end{proof}

\subsubsection{Proof of Theorem~\ref{thm:RR_tradeoff} (Privacy of Randomized Response}
\label{app:subsubsec:proof_rr}

We now prove the main result of section~\ref{sec:rand_response}:


\begin{proof}[Proof]

\medskip\noindent\textbf{Step 1: Reduction to the first decision.} 
By Remark~\ref{rem:prefix} (the cylinder inequality with $t = 1$),
$(0,\delta)$-DP of the transcript implies, for every adjacent pair
$\rmR_T \sim \rmR'_T$ and every $c \in [C]$,
\[
    \Pr[a_1 = c \mid \rmR_T] \;\le\; \Pr[a_1 = c \mid \rmR'_T] + \delta.
\]

\medskip\noindent\textbf{Step 2: An adjacent pair.} 
Let $\rmR'_T$ be the all-zero
stream, and let $\rmR_T$ be identical except for a single report
$r_{1,c^*} = 1$ for an arbitrary $c^* \in [C]$. The pair is adjacent
under Definition~\ref{def:adjacency}.
 
\medskip\noindent\textbf{Step 3: First-decision laws.} 
Under $\rmR_T$, the pending counts at $t=1$ are $1$ at $c^*$ and $0$ elsewhere, so $c^*$ is the unique maximizer: the exploit branch (probability $1 - p_{\mathrm{rand}}$) selects $c^*$ with certainty, and the explore branch selects it with probability $1/C$, giving
\[
    \Pr[a_1 = c^* \mid \rmR_T]
    = (1 - p_{\mathrm{rand}}) + \frac{p_{\mathrm{rand}}}{C}.
\]
Under $\rmR'_T$, all counts are zero, the exploit branch breaks the
$C$-way tie uniformly, and the explore branch is uniform, giving
$\Pr[a_1 = c^* \mid \rmR'_T] = 1/C$.
 
\medskip\noindent\textbf{Step 4: Apply the privacy constraint.} 
Applying Step 1 to the event $\{a_1 = c^*\}$,
\[
    (1 - p_{\mathrm{rand}})\Bigl(1 - \frac{1}{C}\Bigr)
    = \Pr[a_1 = c^* \mid \rmR_T] - \Pr[a_1 = c^* \mid \rmR'_T]
    \;\le\; \delta,
\]
which rearranges to
$p_{\mathrm{rand}} \ge 1 - \delta C/(C-1)$.
 
\medskip\noindent\textbf{Step 5: Error consequence.} 
By Proposition~\ref{app:prop:RR_error}, the conditional error at any
unique-maximizer history equals $\frac{C-1}{C} p_{\mathrm{rand}}$; substituting the bound of Step~4 gives $\frac{C-1}{C} p_{\mathrm{rand}} \ge \frac{C-1}{C}\bigl(1 -\frac{\delta C}{C-1}\bigr) = \frac{C-1}{C} - \delta$.
\end{proof}
 
 
\begin{proposition}[Horizon-independent privacy of randomized response]
\label{app:prop:RR_horizon_free}
Fix $C \ge 2$ and $p_{\mathrm{rand}} \in (0,1]$. For every horizon
$T \in \mathbb{N}$, Algorithm~\ref{app:alg:randomized_response}
satisfies $(0, \delta_p)$-DP with
\[
    \delta_p \;=\; \min\Bigl\{\, 1 - p_{\mathrm{rand}}^{\,T},\;
    \frac{C\,(1-p_{\mathrm{rand}})}{p_{\mathrm{rand}}} \,\Bigr\}.
\]
In particular, $p_{\mathrm{rand}} \ge C/(C+\delta)$ implies
$(0,\delta)$-DP at every horizon simultaneously.
\end{proposition}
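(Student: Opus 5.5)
The plan is to treat the two terms of the minimum separately and finish with a coupling argument. The first term, $1-p_{\mathrm{rand}}^{\,T}$, is exactly Proposition~\ref{app:prop:RR_privacy}, so nothing new is needed there. For the second term I will bound $d_{\mathrm{TV}}(\gMrr(\rmR_T),\gMrr(\rmR'_T))$ by the probability that two coupled runs produce different transcripts. Proposition~\ref{app:prop:tv-char} then converts this into $(0,\delta)$-DP.

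\emph{Coupling.} Let $\rmR_T\sim\rmR'_T$ differ only at $(t^*,c^*)$. Run both executions with shared randomness:
\begin{itemize}
\item the same exploration coins $U_1,\dots,U_T$;
\item the same uniform draw for the explore branch at each round;
\item the same tie-breaking randomness for the exploit branch. For concreteness, a shared uniform random permutation of $[C]$ per round, used to pick the first maximizer in that order.
\end{itemize}

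\emph{Invariant.} For $t<t^*$ the pending counts coincide, so the transcripts coincide. From $t^*$ on, as long as the transcripts have agreed so far, the pending vectors differ only in coordinate $c^*$, and by exactly one. Once $c^*$ is audited in both runs at the same round, Condition~\ref{cond:reset} zeroes $n_{c^*}$ in both. After that the pending vectors are identical again and, under the coupling, stay identical forever. So the transcripts can differ only if, at some round $t\ge t^*$ before the first audit of $c^*$, the two runs select differently. On an explore round ($U_t=1$) both runs output the same uniform draw, so no divergence is possible there. Hence divergence requires at least one exploit round ($U_t=0$) to occur in $[t^*,\tau)$, where $\tau$ is the first round $\ge t^*$ at which the runs output $c^*$.

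\emph{Race bound.} I will dominate the divergence event by the event that an exploit round occurs before the first ``explore and draw $c^*$'' round. Each round independently has three outcomes:
\begin{itemize}
\item exploit, with probability $1-p_{\mathrm{rand}}$;
\item explore and hit $c^*$, with probability $p_{\mathrm{rand}}/C$;
\item explore and miss $c^*$, with probability $p_{\mathrm{rand}}(1-1/C)$.
\end{itemize}
In the third case nothing relevant changes: the runs still agree and $c^*$ is still unaudited. Summing the geometric series over the number of initial ``miss'' rounds gives
\[
\Pr[\text{diverge}]\le \sum_{k\ge 0}\bigl(p_{\mathrm{rand}}(1-\tfrac1C)\bigr)^k(1-p_{\mathrm{rand}}) = \frac{1-p_{\mathrm{rand}}}{1-p_{\mathrm{rand}}+p_{\mathrm{rand}}/C}\le \frac{C(1-p_{\mathrm{rand}})}{p_{\mathrm{rand}}}.
\]
This bound is uniform in $T$. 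Truncation at the horizon only decreases the probability, since the series sums over all $k\ge 0$ regardless of how many rounds remain. The coupling inequality $d_{\mathrm{TV}}\le\Pr[\text{transcripts differ}]$ then gives the second term. Taking the minimum with the first term yields $\delta_p$.

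\emph{Corollary.} If $p_{\mathrm{rand}}\ge C/(C+\delta)$, then $(1-p_{\mathrm{rand}})/p_{\mathrm{rand}}\le \delta/C$, so $C(1-p_{\mathrm{rand}})/p_{\mathrm{rand}}\le\delta$, and this holds at every horizon.

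\emph{Main obstacle.} The delicate step is checking the invariant rigorously. The two runs must stay exactly synchronized on every non-divergent path, including exploit rounds that happen to agree and exploit rounds that select $c^*$; both are harmless to the upper bound, but only if tie-breaking is coupled consistently. I will verify this by induction on $t$, with the hypothesis \emph{transcripts equal so far and pending vectors differing at most in coordinate $c^*$}. This reduces everything to the single-round case analysis above.
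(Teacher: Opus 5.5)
Your proposal is correct and follows essentially the same coupling argument as the paper: shared coins and exploration draws, an invariant that the runs differ only at coordinate $c^*$ until an explore-hit on $c^*$ resynchronizes them through the reset, divergence requiring an exploit round, and a geometric sum. The differences are minor. You use a shared tie-breaking permutation where the paper uses a per-step maximal coupling. You also charge only the first exploit round that precedes the first explore-hit, which gives the slightly sharper bound $(1-p_{\mathrm{rand}})/(1-p_{\mathrm{rand}}+p_{\mathrm{rand}}/C) \le C(1-p_{\mathrm{rand}})/p_{\mathrm{rand}}$, whereas the paper sums $(1-p)(1-p/C)^{t-t^*}$ over all possible first-disagreement times.
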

 
\begin{proof}
The first term is Proposition~\ref{app:prop:RR_privacy}; we prove the
second by a coupling argument. Write $p = p_{\mathrm{rand}}$. Fix an
adjacent pair differing at $(t^*, c^*)$; by symmetry of total
variation we may assume $\rmR_T$ contains the extra report. Couple the
two executions on shared randomness: the same explore/exploit
indicators $U_1, \dots, U_T \sim \mathrm{Bern}(p)$, the same
exploration targets $V_1, \dots, V_T \sim \mathrm{Unif}[C]$, and, at
each exploit step, a maximal coupling of the two worlds' selection
distributions given their current pending counts. Let $(a_t)$ and
$(a'_t)$ denote the coupled transcripts.
 
\medskip\noindent\textbf{Claim 1: State tracking.} On the event that $a_s = a'_s$ for
all $s < t$, the two worlds' pending counts at the decision at time
$t$ agree in every coordinate $c \ne c^*$, and differ at $c^*$ by
exactly $1$ if $t \ge t^*$ and $c^*$ has not been audited at any
$s \in [t^*, t)$, and by $0$ otherwise. This follows by induction on
$t$: identical decisions imply identical resets, all report inputs
agree except the single extra report at $(t^*, c^*)$, and an audit of
$c^*$ at any $s \ge t^*$ resets both worlds' $c^*$-counts to zero
simultaneously, after which all inputs agree.
 
\medskip\noindent\textbf{Claim 2: No divergence before $t^*$.} For $t < t^*$ the counts
agree in all coordinates (Claim 1), so at exploit steps the two
selection distributions are identical and the maximal coupling makes
the decisions agree with probability $1$; at explore steps both select
$V_t$. Hence the first disagreement, if any, occurs at some
$t \ge t^*$.
 
\medskip\noindent\textbf{Claim 3: Explore-hits force permanent agreement.} Define the
explore-hit event $E_s := \{U_s = \text{explore},\, V_s = c^*\}$, with
$\Pr[E_s] = p/C$, independent across $s$ and of all other randomness.
If no disagreement has occurred before $s \ge t^*$ and $E_s$ occurs,
then both worlds audit $c^*$ at step $s$; by Claim 1 their counts
coincide in all coordinates from step $s{+}1$ onward, all subsequent
inputs agree, and by the argument of Claim 2 the decisions agree at
every later step. The transcripts then never differ.
 
\medskip\noindent\textbf{Claim 4: Divergence requires exploitation.} At an explore step
both worlds select $V_t$, so a first disagreement at step $t$ requires
$U_t = \text{exploit}$.
 
Combining Claims 2--4: if the first disagreement occurs at step $t$,
then $t \ge t^*$, $U_t = \text{exploit}$, and no explore-hit $E_s$
occurred for $s \in [t^*, t)$ (else, by Claim 3, disagreement would be
impossible at $t$). Since these are events over independent coins,
\[
    \Pr[\text{first disagreement at } t]
    \;\le\; (1-p)\,\Bigl(1 - \frac{p}{C}\Bigr)^{t - t^*},
\]
and summing over $t \ge t^*$,
\[
    \Pr[\text{transcripts ever differ}]
    \;\le\; (1-p) \sum_{k=0}^{\infty} \Bigl(1 - \frac{p}{C}\Bigr)^{k}
    \;=\; \frac{C\,(1-p)}{p}.
\]
By the coupling characterization of total variation,
$d_{\mathrm{TV}}(\gM(\rmR_T), \gM(\rmR'_T)) \le
C(1-p)/p$ for every adjacent pair, which by
Proposition~\ref{app:prop:tv-char} gives the claimed
$(0, C(1-p)/p)$-DP at every horizon. The final statement follows since
$C(1-p)/p \le \delta \iff p \ge C/(C+\delta)$.
\end{proof}
 
\begin{remark}[The reset is essential]
The proof of Proposition~\ref{app:prop:RR_horizon_free} uses
Condition~\ref{cond:reset} in Claim~1: auditing $c^*$ removes the
differing report from both worlds' states. For a variant of randomized
response running on cumulative counts without resets, the influence of
a single report persists past audits of its target, and a
horizon-independent guarantee of this form need not hold.
\end{remark}

\section{Proof of Proposition~\ref{prop:privacy_reduction} (Privacy of Algorithm~\ref{alg:mechanism})}
\label{app:privacy_reduction_proof}

We give the full proof of Proposition~\ref{prop:privacy_reduction}; the sketch appears in Section~\ref{sec:privacy}.

\begin{proof}
 Let $\rmR_T \sim \rmR_T'$ be any adjacent pair, differing at $(t^*, c^*)$ with
  $r'_{t^*,c^*} = r_{t^*,c^*} + 1$ and $r_{t,c} = r'_{t,c}$ for all
  $(t,c) \ne (t^*,c^*)$.  The proof has four steps.

  \medskip\noindent
  \textbf{Step~1: The transcript is a deterministic function of counter outputs.}

  At each step $t$, the decision $a_t = \arg\max_c \tilde{n}_{t,c}$ is a
  deterministic function of $\{\tilde{n}_{t,c}\}_{c \in [C]}$, which are outputs
  of the counter instances.  By induction over $t$, the full transcript
  $(a_1, \ldots, a_T)$ is a deterministic function $g$ of the collection of all
  outputs produced by all counter instances across all steps.

  \medskip\noindent
  \textbf{Step~2: Identify the unique counter instance affected by the extra
  report.}

  We show by induction on $t$ that for all $t < t^*$, all counter outputs at
  step $t$ are \emph{the same random variables} under $\rmR_T$ and $\rmR_T'$.

  \emph{Base case} ($t = 1$, or vacuously if $t^* = 1$): at $t=1$ all instances
  are freshly initialized with randomness drawn independently of all data.  Since
  $r_{1,c} = r'_{1,c}$ for every $c$ (the streams agree at $t=1$ if $t^* > 1$),
  all counter outputs at $t=1$ are identical under both streams.

  \emph{Inductive step}: assume all counter outputs at steps $1, \ldots, t-1$
  are identical under $\rmR_T$ and $\rmR_T'$.  Then $a_{t-1}$ is the same under both
  streams.  Any restart at step $t-1$ produces a fresh instance with randomness
  drawn independently of all data; since the restart event is identical, this
  instance is the same random object under both streams.  At step $t$, each
  $\gC_c$ receives the same input $r_{t,c} = r'_{t,c}$ (as $t < t^*$) and has
  the same internal state, so its output is identical.  The induction goes
  through.

  It follows that the audit history $\calH_{t^*-1} = (a_1, \ldots, a_{t^*-1})$
  is \emph{identical} under $\rmR_T$ and $\rmR_T'$.  In particular, the restart history
  of $c^*$ up to $t^*-1$ is the same, so the counter instance $\gC_{c^*}^*$
  that is active for $c^*$ at time $t^*$ --- started at some step
  $s^* \le t^*$ --- is the same object (same randomness, started at the same
  step) under both streams.

  For $c \ne c^*$: reports are identical under $\rmR_T$ and $\rmR_T'$ at all times, and
  restart decisions (driven by $\calH$) are the same.  Hence every instance of
  $\gC_c$ is the same random variable under both streams.

  For $c = c^*$: every instance of $\gC_{c^*}$ \emph{other than}
  $\gC_{c^*}^*$ either ran entirely before $t^*$ (identical inputs) or starts
  fresh after some audit of $c^*$ at time $s > t^*$ (with reports identical for
  all $t \ne t^*$, and identical restart times conditional on $\gC_{c^*}^*$
  agreeing --- see Step~4).  In either case its inputs do not include the extra
  report.

  The \emph{only} instance whose input stream differs between $\rmR_T$ and $\rmR_T'$ is
  $\gC_{c^*}^*$: it receives $r_{t,c^*}$ under $\rmR_T$ and
  $r_{t,c^*} + \mathbf{1}[t=t^*]$ under $\rmR_T'$ at each step of its run, so its
  input streams are adjacent (differing at position $j^* = t^* - s^* + 1 \le T$).

  \medskip\noindent
  \textbf{Step~3: Apply the DP guarantee of $\gC$.}

  Since $\gC_{c^*}^*$ processes a stream of length at most $T$ that is adjacent
  between $\rmR_T$ and $\rmR_T'$, and $\gC$ satisfies $(0,\delta)$-DP:
  \[
    d_{\mathrm{TV}}\!\bigl(
      \text{outputs of } \gC_{c^*}^* \text{ under } R,\;
      \text{outputs of } \gC_{c^*}^* \text{ under } R'
    \bigr)
    \;\le\; \delta.
  \]

  \medskip\noindent
  \textbf{Step~4: Couple all counter outputs and conclude.}

  Couple all counter instances identically except $\gC_{c^*}^*$ (possible since
  they are the same random variables under both streams).  For $\gC_{c^*}^*$,
  use the \emph{optimal coupling}, which achieves
  $\Pr[\gC_{c^*}^*\text{ disagrees}] \le \delta$.

  Conditional on $\gC_{c^*}^*$ agreeing: $a_{t^*}$ is the same under both
  streams (it is a deterministic function of the noisy counts, all of which are
  now identical).  Therefore all subsequent restarts occur at the same times, all
  subsequent counter inputs are identical (reports agree everywhere except at
  $t^*$, already handled), and all subsequent counter outputs can be coupled
  identically.  Hence:
  \[
    \Pr[\text{any counter output disagrees}] \;\le\; \delta.
  \]
  Since $g$ is deterministic, agreement of all counter outputs implies agreement
  of transcripts.  By the coupling characterisation of TV distance:
  \[
    d_{\mathrm{TV}}\!\bigl(\calM(R),\, \calM(R')\bigr)
    \;\le\;
    \Pr[\text{transcripts disagree}]
    \;\le\; \delta.
  \]
  Since $d_{\mathrm{TV}}(P,Q) = \sup_\calA |\Pr[P \in \calA] - \Pr[Q \in \calA]|$,
  this gives $(0,\delta)$-DP for all adjacent pairs. 
\end{proof}

\section{Proof of Proposition~\ref{prop:base-counter-privacy}}
\label{app:base_counter_privacy_proof}

Before proving Proposition~\ref{prop:base-counter-privacy} we state a useful textbook Lemma.

\begin{lemma}[TV distance for shifted Gaussians]
\label{app:lem:tv-shifted}
If
\[
P=\mathcal N(\boldsymbol{\mu},\sigma^2 \rmI_n),
\qquad
Q=\mathcal N(\boldsymbol{\mu}+\Delta,\sigma^2 \rmI_n),
\]
then
\[
d_{\mathrm{TV}}(P,Q)
=
2\Phi\!\left(\frac{\|\Delta\|_2}{2\sigma}\right)-1,
\]
where $\Phi$ is the standard Gaussian CDF.
\end{lemma}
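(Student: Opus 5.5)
The plan is to reduce the claim to a one-dimensional Gaussian computation by rotating coordinates. Since $\Delta \neq 0$ (the case $\Delta=0$ is trivial, both sides being $0$), I will choose an orthogonal matrix $U$ whose first row is $\Delta^\top/\|\Delta\|_2$. Total variation is invariant under bijective measurable maps, so applying the affine bijection $x \mapsto U(x-\boldsymbol{\mu})/\sigma$ to both $P$ and $Q$ leaves $d_{\mathrm{TV}}(P,Q)$ unchanged. The map sends $P$ to $\mathcal N(0,\rmI_n)$ and $Q$ to $\mathcal N(m\rve_1,\rmI_n)$, where $m := \|\Delta\|_2/\sigma$.

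Both images have independent coordinates, and they share the law of coordinates $2,\dots,n$. Writing the densities as products, the TV integral $\tfrac12\int|p-q|$ factorizes. The common $(n-1)$-dimensional standard normal density integrates to $1$, so
\[
d_{\mathrm{TV}}(P,Q) = d_{\mathrm{TV}}\bigl(\mathcal N(0,1),\mathcal N(m,1)\bigr).
\]

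For the scalar case I use $d_{\mathrm{TV}} = \sup_A (Q(A)-P(A)) = Q(A^*)-P(A^*)$, where $A^* = \{x : q(x) > p(x)\}$. The likelihood ratio $q/p = \exp(mx - m^2/2)$ is increasing in $x$, so $A^* = \{x > m/2\}$. This gives
\[
Q(A^*) - P(A^*) = \Phi(m/2) - \bigl(1-\Phi(m/2)\bigr) = 2\Phi(m/2)-1,
\]
where both terms use the symmetry of $\Phi$. Substituting $m=\|\Delta\|_2/\sigma$ yields $2\Phi\bigl(\|\Delta\|_2/(2\sigma)\bigr)-1$.

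No step is a real obstacle. The only point needing care is justifying that TV is preserved under the invertible affine map, which follows directly from the change-of-variables formula for densities. One could instead skip the rotation and project onto the direction $\Delta$, but then one must argue that projection loses no distinguishing power. The rotation plus product factorization handles that cleanly, so I will use it.
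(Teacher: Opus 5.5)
The paper gives no proof of this lemma. It is quoted as a textbook fact and used only in the proof of Proposition~\ref{prop:base-counter-privacy}.

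Your argument is correct and complete. The affine bijection $x \mapsto U(x-\mu)/\sigma$ preserves total variation and sends the pair to $\mathcal N(0,\mathbf I_n)$ and $\mathcal N(m\,\mathbf e_1,\mathbf I_n)$. The product factorization then reduces the problem to $d_{\mathrm{TV}}(\mathcal N(0,1),\mathcal N(m,1))$. The Scheff\'e set $\{x>m/2\}$ gives $2\Phi(m/2)-1$, and the case $\Delta=0$ is correctly handled separately.

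As an optional simplification, you can skip the rotation. In $\mathbb{R}^n$ the likelihood ratio is $q(x)/p(x)=\exp\bigl(\langle x-\mu,\Delta\rangle/\sigma^2-\|\Delta\|_2^2/(2\sigma^2)\bigr)$. So the optimal set is the half-space $\{x:\langle x-\mu,\Delta\rangle>\|\Delta\|_2^2/2\}$. The statistic $\langle x-\mu,\Delta\rangle$ is $\mathcal N(0,\sigma^2\|\Delta\|_2^2)$ under $P$ and $\mathcal N(\|\Delta\|_2^2,\sigma^2\|\Delta\|_2^2)$ under $Q$, which gives $2\Phi\bigl(\|\Delta\|_2/(2\sigma)\bigr)-1$ at once. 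This also settles your worry about projecting onto $\Delta$: no distinguishing power is lost, because the optimal set is itself defined through that projection.
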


\begin{proof}[Proof of Proposition~\ref{prop:base-counter-privacy}]
Let $\rvx\sim \rvx'$ be adjacent, differing only in coordinate $j^*$.  Then
$\rvx'-\rvx=\rve_{j^*}$, so the encoded noisy vectors satisfy
\[
\rmC^{(T)}\rvx' + \rvz
=
\rmC^{(T)}\rvx + \rvz + \rmC^{(T)}\rve_{j^*}.
\]
Thus the two encoded distributions are Gaussians with common covariance
$\sigma^2 \rmI_T$ and mean shift $\Delta = \rmC^{(T)}\rve_{j^*}$.  By
Lemma~\ref{app:lem:tv-shifted},
\begin{align*}
d_{\mathrm{TV}}\left(\rmC^{(T)}\rvx' + \rvz, \rmC^{(T)}\rvx + \rvz \right)
 =
2\Phi\!\left(\frac{\|\rmC^{(T)}\rve_{j^*}\|_2}{2\sigma}\right)-1 \le 
2\Phi\!\left(\frac{M_T}{2\sigma}\right)-1
=
\delta.
\end{align*}
Since $\gC(\rvx)$ is obtained from the encoded noisy vector by
deterministic post-processing through $\rmB^{(T)}$, total variation cannot increase.
Hence $\gC$ is $(0,\delta)$-\textit{DP}.
\end{proof}

 \section{Proof of Theorem~\ref{thm:utility} (Utility)}
\label{app:subsec:utility}


The proof of Theorem~\ref{thm:utility} in Section~\ref{sec:utility-thm} relies on the following Lemma.

\begin{lemma}[Fresh-noise decomposition]
\label{app:lem:fresh}
Fix any time $t \in [T]$ and any audit target $c \in [C]$. Let $\ell:=\lct \in [T]$ be the length of the active run of $c$ at time $t$ and define $\tau_c:=t-\lct+1$. 
For simplicity, we re-number and write
$\mathbf{x}^{(t,c)}:=(x^{(t,c)}_1,\dots,x^{(t,c)}_{\lct})$ for the report counts in that run, and
$\boldsymbol{\zeta}^{(t,c)}:=(\zeta^{(t,c)}_1,\dots,\zeta^{(t,c)}_{\lct})$ for the Gaussian noise vector associated with the
active instance of the Gaussian matrix counter instantiated by the $\lct \times \lct$ leading principal truncations $\rmB^{(\lct)}$ and $\rmC^{(\lct)}$. 

Define
\begin{equation}
\label{eq:Gct}
  \Gct
  :=
  \sum_{j=1}^{\lct-1} f_{\lct-j} \zeta^{(t,c)}_j,
\end{equation}
with  $\Gct:=0$ if $\lct=1$. Then the noisy pending count satisfies
\begin{equation}
\label{eq:decomp-ct}
  \tilde{n}_{t, c}
  =
  \underbrace{n_{t, c}+\Gct}_{=:\mu_{t, c}}
  +
  \zeta^{(t, c)}_{\lct},
\end{equation}
where $n_{t, c}=\sum_{j=1}^{\lct} x^{(t, c)}_j$ is the count of currently active reports. Moreover:
\begin{enumerate}
  \item[(i)] conditionally on $\mathcal{H}_{t-1} = h$, the fresh coordinate
$\zeta^{(t, c)}_{\ell_{t, c}}$ is distributed $\mathcal{N}(0,\sigma^2)$;
  \item[(ii)] for $c\neq c'$, the variables
  $\zeta^{(t, c)}_{\lct}$ and $\zeta^{(t,c')}_{\ell_{(t, c')}}$
  are independent
\end{enumerate}
\end{lemma}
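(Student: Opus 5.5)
We prove the decomposition directly from the definition of the Gaussian matrix-mechanism counter, eq.~(\ref{eq:generic-counter}), specialised to the Toeplitz factorization, and then read off (i) and (ii) from the restart structure of Algorithm~\ref{alg:mechanism}. The active instance for $c$ at time $t$ was started at step $\tau_c$ and has received exactly the inputs $\mathbf{x}^{(t,c)}$. Its $\lct$-th output is the $\lct$-th row of $\rmB(\rmC\mathbf{x}+\boldsymbol\zeta)$. The first $\lct$ rows of the product involve only the leading $\lct\times\lct$ blocks, because $\rmB,\rmC$ are lower triangular. Hence this output equals $e_{\lct}^\top \rmB^{(\lct)}\rmC^{(\lct)}\mathbf{x}^{(t,c)} + e_{\lct}^\top\rmB^{(\lct)}\boldsymbol\zeta^{(t,c)}$. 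The first term is $e_{\lct}^\top\rmA^{(\lct)}\mathbf{x}^{(t,c)}=\sum_j x_j^{(t,c)}=n_{t,c}$, using $\rmB\rmC=\rmA$ restricted to principal blocks (again by lower triangularity). The second term is $\sum_{j=1}^{\lct} f_{\lct-j}\zeta_j^{(t,c)}$. We split off $j=\lct$, where $f_0=1$, and the remaining sum is exactly $\Gct$. This gives eq.~(\ref{eq:decomp-ct}). We must also check that $n_{t,c}$ coincides with the active count of Definition~\ref{def:active-counts}; this holds because restarts occur exactly at audits of $c$, so $\tau_c=\tau_c(t)+1$.

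For (i), the key observation is measurability. The coordinates $\zeta^{(t,c)}_1,\dots,\zeta^{(t,c)}_{\lct-1}$ were already used in producing the outputs $\tilde n_{s,c}$ for $s<t$, so they are functions of $\calH_{t-1}$ together with the data. By contrast, $\zeta^{(t,c)}_{\lct}$ is a coordinate of the $\mathcal N(0,\sigma^2\rmI)$ vector that no prior output depends on. Formally, $\calH_{t-1}$ is generated by the noise coordinates of all instances up to index $\lct-1$ (plus fresh-instance coordinates in other runs and tie-breaking coins). Coordinates of an isotropic Gaussian are mutually independent, and instances are drawn independently. Therefore $\zeta^{(t,c)}_{\lct}$ is independent of $\calH_{t-1}$, and its conditional law is $\mathcal N(0,\sigma^2)$.

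The main obstacle is that $\lct$ and the identity of the active instance are themselves random, being determined by $\calH_{t-1}$ through the restart times. To handle this, I would condition on the event $\{\calH_{t-1}=h\}$, which fixes all $\ell_{t,c}$ and all instance start times. I would then index noise by (instance label, coordinate) as a fixed i.i.d.\ Gaussian array, independent of the tie-breaking coins. On this event, the history is a function of array entries disjoint from the entries $(\text{instance}_c, \lct)$, so independence survives conditioning. A stopping-time style argument makes this rigorous: the event that instance $k$ is active at $t$ with length $\ell$ is measurable with respect to entries other than $(k,\ell)$.

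Part (ii) follows from the same array representation. For $c\ne c'$ the two fresh coordinates are distinct entries of independent instances, and both are independent of $\calH_{t-1}$. Hence they are conditionally independent given $\calH_{t-1}=h$.
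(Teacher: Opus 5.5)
Your proposal is correct and follows the paper's proof: expand the last row of $\rmB^{(\lct)}\boldsymbol{\zeta}^{(t,c)}$, split off the $f_0=1$ term to isolate $\zeta^{(t,c)}_{\lct}$, then argue that this coordinate has not entered any output before time $t$ and that distinct instances carry independent noise. You are more careful than the paper in two places: you justify the principal-block truncation via lower triangularity, and you note that $\lct$ and the identity of the active instance are random. The paper's one-line ``therefore independent of $\calH_{t-1}$'' glosses over that randomness, and your array/stopping-time argument is what makes it rigorous.

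There is one small slip in (ii). Each fresh coordinate being independent of $\calH_{t-1}$, plus the two being mutually independent, does not by itself imply they are conditionally independent given $\calH_{t-1}$, since pairwise independence does not give joint independence. The fix is to run your array argument with both entries $(\text{instance}_c,\lct)$ and $(\text{instance}_{c'},\ell_{t,c'})$ removed at once. That shows the pair is jointly independent of $\calH_{t-1}$, hence conditionally i.i.d.\ $\mathcal{N}(0,\sigma^2)$, which is the form used in the proof of Theorem~\ref{thm:utility}.
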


\begin{proof}
Since the active run has length $\lct$, the current counter output is the
$\lct$-th output of the truncated mechanism, i.e.
\[
\tilde n_{t, c}
=
n_{t, c} + (\rmB^{(\lct)} \boldsymbol{\zeta}^{(t, c)})_{\lct}
\]
Because $\rmB^{(\lct)}$ is lower-triangular Toeplitz with
\[
\rmB^{(\lct)}_{\lct,j}=f_{\lct-j}
\qquad\text{for } 1\le j\le \lct,
\]
we obtain
\begin{align}
(\rmB^{(\lct)} \boldsymbol{\zeta}^{(t, c)})_{\lct}
&=
\sum_{j=1}^{\lct} f_{{\lct}-j} \zeta^{(t,c)}_j \notag\\
&=
f_0 \zeta^{(t, c)}_{\lct} + \sum_{j=1}^{\lct-1} f_{\lct-j} \zeta^{(t, c)}_{j} \notag\\
&=
\zeta^{(t, c)}_{\lct} + G_{t, c}
\label{eq:toeplitz-expand}
\end{align}
using $f_0=1$ and the definition of $G_{t, c}$. This proves
eq.~(\ref{eq:decomp-ct}).

For part (i), every earlier output of the same active instance corresponds
to some run-position $\ell <\lct$, and therefore depends only on
$\zeta^{(t, c)}_1,\dots,\zeta^{(t, c)}_{l}$, hence only on
$\zeta^{(t, c)}_1,\dots,\zeta^{(t, c)}_{\lct-1}$. Thus the coordinate $\zeta^{(t, c)}_{\lct}$
has not appeared in any previous output of the active instance. Since the
coordinates of $\zeta^{(t, c)}_{\lct}$ are i.i.d.\ Gaussian, $\zeta^{(t, c)}_{\lct}$ is
independent of $\zeta^{(t, c)}_1,\dots,\zeta^{(t, c)}_{\lct-1}$ and therefore
independent of $\mathcal H_{t-1}$.

For part (ii), if $c\neq c'$, then the active instances for $c$ and $c'$
use independently sampled Gaussian vectors. Hence the fresh coordinates
$\zeta^{(t, c)}_{\lct}$ and $\zeta^{(t, c')}_{\lct}$ are independent.
\end{proof}

\begin{proof}[Proof of Theorem~\ref{thm:utility}]

Throughout, condition on $\calH_{t-1} = h$, which fixes $\ell_c$, $n_{t, c}$,
$G_c$, $\mu_{t, c} = n_{t, c} + G_c$, and $c^*_t$.  The only remaining randomness is
$\{\zeta^{(t, c)}_{\lct}\}_{c \in [C]}$.

\medskip\noindent
\textbf{Step~1: Noisy counts in terms of fresh noise.}
By Lemma~\ref{app:lem:fresh}, $\tilde{n}_{t, c} = \mu_{t, c} + \zeta^{(t, c)}_{\lct}$
where $\{\zeta^{(t, c)}_{\lct}\}_{c \in [C]}$ are i.i.d.\ $\mathcal{N}(0,\sigma^2)$,
independent of $h$.  Hence
$a_t = \arg\max_{c}(\mu_{t, c} + \zeta^{(t, c)}_{\lct})$.

\medskip\noindent
\textbf{Step~2: Union bound over challengers.}
\[
  \bigl\{a_t \ne c^*_t\bigr\}
  \;\subseteq\;
  \bigcup_{c\,\ne\,c^*_t}
  \bigl\{
    \zeta^{(t, c)}_{\lct} -  \zeta^{(t, c^*_t)}_{\ell_{t, c^*_t}}
    \;\ge\;
    \mu_{c^*_t} - \mu_{t, c}
  \bigr\}.
\]
By the union bound:
\begin{equation}\label{eq:union}
  \Pr\bigl[a_t \ne c^*_t \mid h\bigr]
  \;\le\; \sum_{c\,\ne\,c^*_t}
  \Pr\!\bigl[W_c \ge \tilde{\Delta}_{t, c} \mid h\bigr],
\end{equation}
where $W_c := \zeta^{(t, c)}_{\lct} -  \zeta^{(t, c^*_t)}_{\ell_{t, c^*_t}}$.

\medskip\noindent
\textbf{Step~3: Exact probability for each challenger.}
By Lemma~\ref{app:lem:fresh}(i)--(ii), $\zeta^{(t, c)}_{\lct}$ and
$ \zeta^{(t, c^*_t)}_{\ell_{t, c^*_t}}$ are independent $\mathcal{N}(0,\sigma^2)$
variables, independent of $h$.  Their difference satisfies
$W_c \sim \mathcal{N}(0, 2\sigma^2)$, so
\begin{equation}\label{eq:exact}
  \Pr\bigl[W_c \ge \tilde{\Delta}_{t, c} \mid h\bigr]
  \;=\; \Phi\!\left(-\frac{\tilde{\Delta}_{t, c}}{\sqrt{2}\,\sigma}\right).
\end{equation}
This is an \emph{equality}, not an inequality.

\medskip\noindent
\textbf{Step~4: Substitute the noise calibration.}
With $\sigma = M_T/(2\kappa_\delta)$, we have $\sqrt{2}\,\sigma =
M_T/(\sqrt{2}\,\kappa_\delta)$, so
$\tilde{\Delta}_{t,c}/(\sqrt{2}\,\sigma) = \sqrt{2}\,\kappa_\delta\,\tilde{\Delta}_{t,c}/M_T$.
Substituting into~ eq.~(\ref{eq:union}) via eq.~(\ref{eq:exact}) gives the conditional
bound eq.~(\ref{eq:cond-utility}).  Taking expectations and applying the law of
total expectation gives the desired result. 
\end{proof}

\begin{remark}[A sufficient condition for small unconditional error]
\label{app:rem:sufficient}
A condition on true pending counts (before noise) can be given as follows.
Let
\[
  \mathsf{ME}_T
  :=
  \mathrm{MaxErr}(\rmB^{(T)},\rmC^{(T)})
  =
  M_T^2 .
\]
For any $c \ne c^*_t$, the accumulated-noise difference
$G_{t,c^*_t} - G_{t,c}$ is mean-zero Gaussian with variance
$V \le 2\sigma^2 M_T^2$. Therefore,
\begin{align}
  \Pr\!\left[
    \tilde{\Delta}_{t,c}
    \le \frac{\Delta_{t,c}}{2}
  \right]
  &=
  \Pr\!\left[
    G_{t,c^*_t} - G_{t,c}
    \le -\frac{\Delta_{t, c}}{2}
  \right] \notag \\
  &\le
  \exp\!\left(
    -\frac{\Delta_{t, c}^2}
           {16\sigma^2 M_T^2}
  \right) \notag \\
  &=
  \exp\!\left(
    -\frac{\kappa_\delta^2\Delta_{t,c}^2}
           {4\mathsf{ME}_T^2}
  \right).
\end{align}
Here the inequality uses the one-sided Gaussian tail bound
$\Pr[Y \le -\lambda] \le \exp(-\lambda^2/(2V))$ for
$Y \sim \mathcal{N}(0,V)$, and the final equality uses
$\sigma = M_T/(2\kappa_\delta)$.

By total probability, for each challenger $c$,
\begin{align}
  \Pr[\tilde{n}_{t,c} \ge \tilde{n}_{t,c^*_t}]
  &\le
  \Phi\!\left(
    -\frac{\kappa_\delta\Delta_{t,c}}
           {\sqrt{2}M_T}
  \right)
  +
  \exp\!\left(
    -\frac{\kappa_\delta^2\Delta_{t,c}^2}
           {4\mathsf{ME}_T^2}
  \right).
\end{align}
Both terms decrease rapidly as $\Delta_{t,c}$ grows, giving an
unconditional error bound in terms of the true pending-count advantage
alone.
\end{remark}

\begin{remark}[Interpretation of the effective gap $\tilde{\Delta}_{t,c}$]
\label{rem:eff-gap}
The effective gap decomposes as $\tilde{\Delta}_{t,c} = \Delta_{t,c} +
(G_{t,c^*_t} - G_{t,c})$, where $\Delta_{t,c} > 0$ is the true pending-count
advantage and $G_{t,c^*_t} - G_{t,c}$ is the difference of accumulated past
noise terms. The latter has mean zero and variance
$\sigma^2\bigl(\sum_{k=1}^{\ell_{c^*_t}-1}f_k^2 + \sum_{k=1}^{\ell_{t,c}-1}f_k^2\bigr)
\le 2\sigma^2 M_T^2$. When $\Delta_{t,c}$ is large relative to $\sigma M_T$,
the accumulated noise is unlikely to reverse the ordering and the effective
gap is typically close to $\Delta_{t,c}$.
\end{remark}
 
\begin{remark}[Tightness of the error bound]
\label{rem:tight}
The only inequality in the proof is the union bound (Step~2), which is tight
when one challenger dominates the sum, for example when $C=2$ (a single
challenger, in which case the bound is an equality) or when one challenger
has a much smaller effective gap than all others. The per-challenger
probabilities eq.~(\ref{eq:exact}) are equalities. The bound is therefore the
tightest of this form.
\end{remark}

\section{Experimental Details}
\label{app:sec:exp_details}

All experiments were run on a standard laptop/CPU machine; no GPU was used. The full experiment suite completes in under 10 minutes. 

\subsection{Experiment 1 -- Mis-selection over gap}
\label{app:subsec:experimenta}

The goal of this set of experiments is to show that TCA's mis-selection rate decays sharply with the gap $\Delta$
between the leader and the runner-up, while RR is essentially
gap-independent and saturates near $(C-1)/C$. The sweep over the number of audit targets $C$ sweep
demonstrates that the qualitative picture is robust across pool sizes
spanning 1.5 orders of magnitude; the gap required to suppress
mis-selection grows only mildly with $C$ (roughly like
$\sigma\sqrt{2\ln C}$, the scale of the maximum of $C-1$ Gaussians).

\paragraph{Setup.} We fix a single-shot configuration: each audit target has been observed for
exactly $L$ steps since its last reset, and the pending counts are $n_0 = \Delta$ for the leader (audit target $c_0)$ and $n_c = 0$ for all other audit targets.

For each $(L, \delta, \Delta)$ we run $S = 1000$ Monte-Carlo trials and
estimate $\Pr[a \neq 0]$. For TCA we exploit that the total noise on
each audit target at run length $L$ is Gaussian with variance
$\sigma^2 \sum_{k=0}^{L-1} f_k^2$, drawn independently per audit target. Greedy and uniform-random
mis-selection probabilities are computed in closed form
(0 if $\Delta>0$, $(C-1)/C$ if $\Delta=0$ for greedy; $(C-1)/C$ always
for uniform).

\begin{table*}[t]
\centering
\caption{\textbf{Experimental parameters.}}
\label{app:tab:experimental_parameters}
\begin{tabular}{ll}
\toprule
\textbf{Parameter} & \textbf{Value} \\
\midrule
audit target counts $C$ & $\{5,\;20,\;50,\;200\}$ (one figure each) \\
Run lengths $L$ & $\{100,\;1000\}$ (one panel each) \\
Privacy parameters $\delta$ & $\{0.05,\;0.10,\;0.20\}$ \\
Gaps $\Delta$ & $\{0,\;1,\;2,\;4,\;8,\;16,\;32,\;64,\;128\}$ \\
Monte-Carlo trials per point & $S = 1000$ \\
RNG seed (TCA) & 42 \\
RNG seed (RR) & 123 \\
\bottomrule
\end{tabular}
\end{table*}

\paragraph{Hyperparameters.} Calibrated noise scales are independent of $C$: TCA's $\sigma$ depends
only on $L$ and $\delta$, and likewise $p_{\mathrm{rand}}$ for RR.

\begin{table}[ht]
\centering
\caption{\textbf{Calibrated noise scales for TCA and randomized response.}}
\label{app:tab:calibrated_noise_scales}
\begin{tabular}{rrrr}
\toprule
$L$ & $\delta$ & $\sigma$ & $p_{\mathrm{rand}}$ \\
\midrule
100  & 0.05 & 12.6862 & 0.999487 \\
100  & 0.10 &  6.3306 & 0.998947 \\
100  & 0.20 &  3.1400 & 0.997771 \\
1000 & 0.05 & 14.4078 & 0.999949 \\
1000 & 0.10 &  7.1897 & 0.999895 \\
1000 & 0.20 &  3.5661 & 0.999777 \\
\bottomrule
\end{tabular}
\end{table}

\subsection{Experiment 2 -- Dynamic Utility Auditing}
\label{app:dynamic_utility_experiment}

We evaluate the mechanisms in an online report stream with report resets. The goal of this experiment is
to measure how much utility is lost over time when using a private auditing mechanism rather than the
non-private greedy baseline.

\paragraph{Mechanisms.}
We compare four auditing mechanisms:
\begin{enumerate}
    \item \textbf{Toeplitz Counter Auditing (TCA).} Each audit target runs an independent Toeplitz
    continual-counting instance. At each timestep the mechanism audits the audit target with the largest
    noisy active-count estimate. After an audit, the selected audit target's active count and private counter
    instance are reset with fresh independent randomness.

    \item \textbf{Randomized response auditing (RR).} With probability
    \(p_{\mathrm{rand}}=(1-\delta)^{1/T}\), the mechanism audits a uniformly random audit target. With
    probability \(1-p_{\mathrm{rand}}\), it audits an audit target with maximal active count, with ties broken
    uniformly at random. After the audit, the selected audit target's active count is reset. This reset step is
    important: RR is evaluated under the same active-report auditing semantics as TCA and Greedy.

    \item \textbf{Greedy auditing.} The mechanism audits an audit target with maximal active count, with
    ties broken uniformly at random. After the audit, the selected audit target's active count is reset. This
    is a non-private upper baseline.

    \item \textbf{Uniform random auditing.} The mechanism audits a uniformly random audit target,
    ignoring reports. After the audit, the selected audit target's active count is reset. This is a privacy-only
    lower baseline.
\end{enumerate}

\paragraph{Privacy calibration.}
For each run, the private mechanisms are calibrated to the same transcript-level privacy guarantee
\((0,\delta)\) over the same certified horizon \(T\). TCA uses the Gaussian Toeplitz calibration
\[
    \sigma
    =
    \frac{M_T}{2\Phi^{-1}((1+\delta)/2)},
\]
where \(M_T\) is the Toeplitz encoder sensitivity. Randomized response uses
\[
    p_{\mathrm{rand}}
    =
    (1-\delta)^{1/T}.
\]
Thus both private mechanisms are compared under the same fixed-horizon privacy requirement.

\paragraph{Report stream.}
We generate reports from an independent Poisson model. At each timestep \(t\), audit target \(0\) receives
reports
\[
    x_{t,0} \sim \mathrm{Poisson}(\lambda_{\mathrm{lead}}),
\]
and each other audit target \(c\in\{1,\ldots,C-1\}\) receives reports
\[
    x_{t,c} \sim \mathrm{Poisson}(\lambda_{\mathrm{other}}).
\]
All draws are independent across timesteps and audit targets. Although audit target \(0\) has the highest
arrival rate in expectation, the audit target with maximal active count need not be audit target \(0\) at a
given timestep, because audits reset active reports.

\paragraph{Mechanism-specific active counts.}
Each mechanism \(\mathcal M\) maintains its own active-count vector
\[
    n_t^{\mathcal M}
    =
    (n_{t, 1}^{\mathcal M},\ldots,n_{t, C}^{\mathcal M}),
\]
because different mechanisms audit different audit targets and therefore induce different reset histories.
At each timestep \(t\), the new report vector \(x_t\) is added to that mechanism's active counts. The mechanism
then selects an audited audit target \(a_t^{\mathcal M}\). All utility metrics are computed before applying
the reset. Finally, the selected audit target is reset:
\[
    n_{t,a_t^{\mathcal M}}^{\mathcal M} \leftarrow 0.
\]
For TCA, the selected audit target's private counter instance is also restarted.

\paragraph{Metrics.}
Our primary metric is the \emph{active-count deficit}
\[
    d_t^{\mathcal M}
    :=
    \max_{c\in[C]} n_{t,c}^{\mathcal M}
    -
    n_{t,a_t^{\mathcal M}}^{\mathcal M}.
\]
This measures how many fewer active reports are resolved by \(\mathcal M\)'s selected audit than by the best
available audit under the same reset history. We report the running average deficit
\[
    \bar d_t^{\mathcal M}
    :=
    \frac{1}{t}\sum_{s=1}^t d_s^{\mathcal M}.
\]
We also compute a normalized deficit,
\[
    \eta_t^{\mathcal M}
    :=
    \frac{d_t^{\mathcal M}}{\max_{c\in[C]} n_{t,c}^{\mathcal M}+1},
\]
and its running average
\[
    \bar\eta_t^{\mathcal M}
    :=
    \frac{1}{t}\sum_{s=1}^t \eta_s^{\mathcal M}.
\]
The normalized version is useful for comparing regimes with different report volumes. Finally, we record the
number of reports resolved by the selected audit,
\[
    u_t^{\mathcal M}
    :=
    n_{t, a_t^{\mathcal M}}^{\mathcal M},
\]
and its running average
\[
    \bar u_t^{\mathcal M}
    :=
    \frac{1}{t}\sum_{s=1}^t u_s^{\mathcal M}.
\]
Lower values of \(\bar d_t^{\mathcal M}\) and \(\bar\eta_t^{\mathcal M}\) indicate better performance, while
higher values of \(\bar u_t^{\mathcal M}\) indicate that audits are resolving more active reports.

\paragraph{Implementation details.}
Unless otherwise stated, we use
\[
    C=50,\qquad
    T=1000,\qquad
    \delta=0.10,\qquad
    \lambda_{\mathrm{lead}}=1.0,\qquad
    \lambda_{\mathrm{other}}=0.2.
\]
We average over \(100\) independent random seeds. For each seed, the same generated report stream is fed to
each mechanism, but each mechanism maintains its own active counts and reset history. We plot the mean
running metric across seeds with.

\paragraph{Interpretation.}
Greedy auditing has zero active-count deficit by construction. Uniform random auditing ignores reports and
therefore provides a lower utility baseline. Randomized response is expected to behave close to uniform
random auditing in long horizons, because the fixed-horizon privacy calibration makes
\(p_{\mathrm{rand}}\) close to one. TCA should accumulate substantially less active-count deficit when the
active-count gaps are large relative to the calibrated counter noise.

\section{Additional Results}
\label{app:sec:additional_results}

\subsection{Experiment 1 with \texorpdfstring{$C \in \{20,50,200\}$}{C in \{20,50,200\}}}

We vary the number of audit targets in the setup of Experiment 1 and show the results in Figures~\ref{fig:misselection_vs_gap_c20}-\ref{fig:misselection_vs_gap_c200}. TCA curves move closer to uniform random auditing, but still have low mis-selection rates as the report gap grows.

\begin{figure*}
    \centering
    \includegraphics[width=0.95\linewidth]{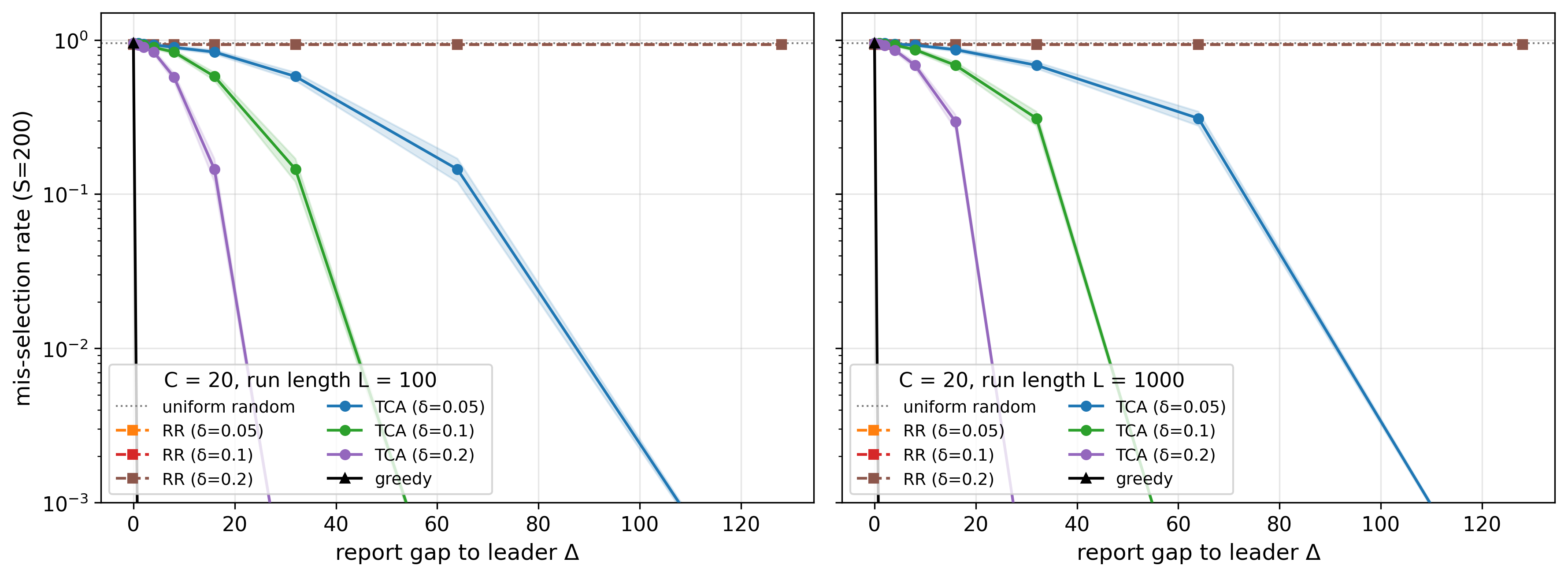}
    \caption{\textbf{Mis-selection Rate vs. Leading Gap (C=20).}}
    \label{fig:misselection_vs_gap_c20}
\end{figure*}

\begin{figure*}
    \centering
    \includegraphics[width=0.95\linewidth]{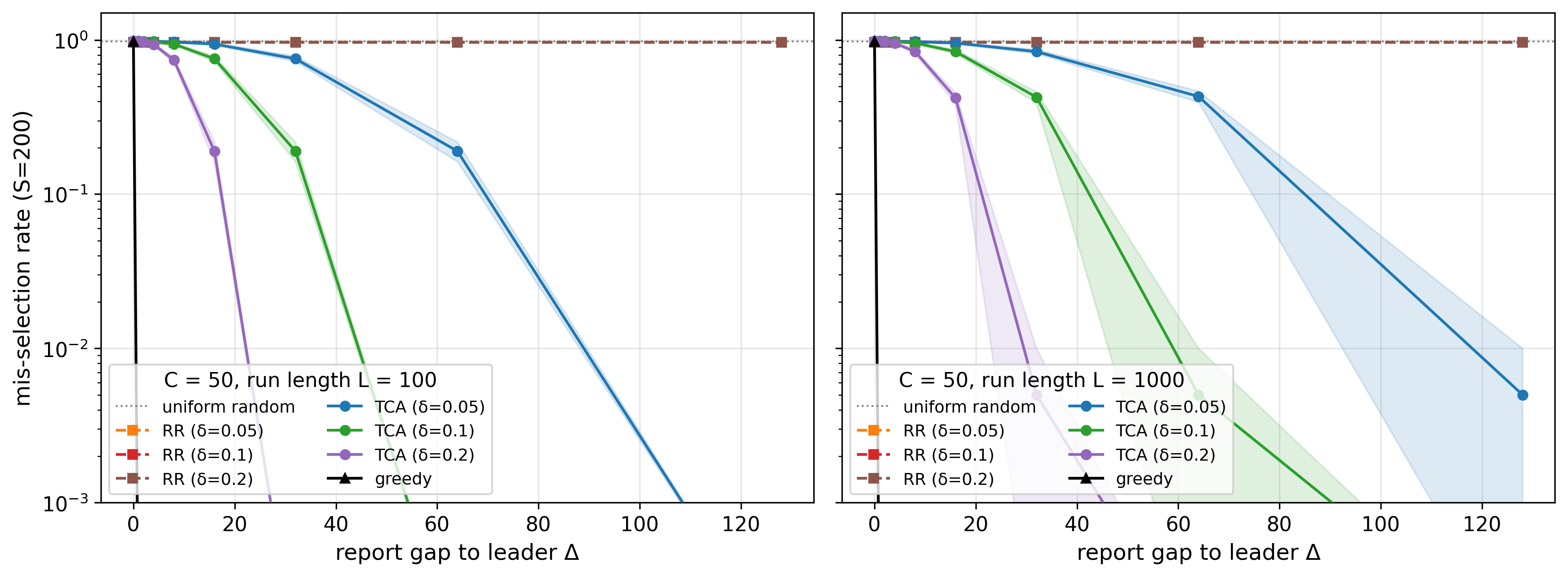}
    \caption{\textbf{Mis-selection Rate vs. Leading Gap (C=50).}}
    \label{fig:misselection_vs_gap_c50}
\end{figure*}

\begin{figure*}
    \centering
    \includegraphics[width=0.95\linewidth]{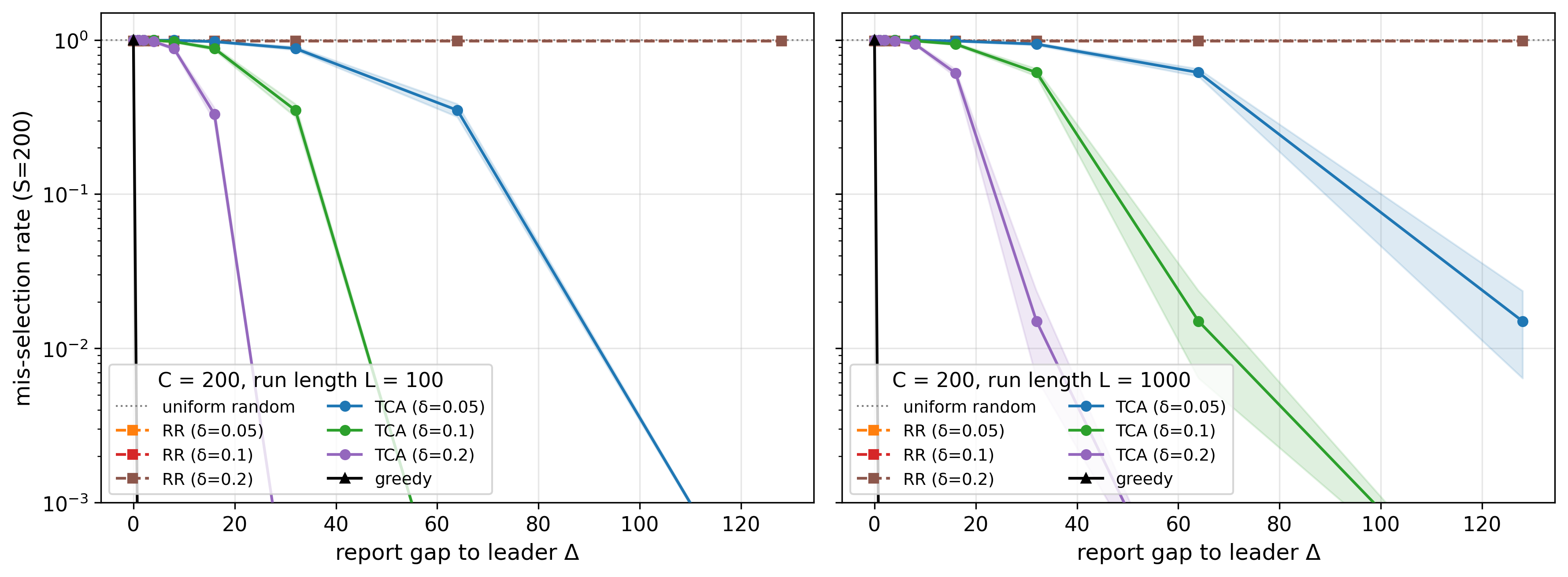}
    \caption{\textbf{Mis-selection Rate vs. Leading Gap (C=200).}}
    \label{fig:misselection_vs_gap_c200}
\end{figure*}

\section{Operationalization}
\label{app:sec:operationalization}

Our framework is defined by a small number of operational choices regarding the audit targets $c \in [C]$, the audit horizon $T$ and the target per-report deniability level $\delta$. We describe these choices in a frontier-AI auditing setting, where confidential reports from employees or contractors may inform follow-up audits, model evaluations, or compliance reviews.

In this setting, the auditor may be a public authority, an AI-safety institute, or a trusted third-party evaluator. The adversary is the audited developer, which observes the audit transcript and may know substantial internal context


The utility theorem can also be read operationally. If the true leading audit target has effective gap at least $g$ over every challenger, then Theorem~\ref{thm:utility} implies
\begin{align}
\Pr[a_t \neq c_t^* \mid \mathcal H_{t-1}]
\leq
(C-1)\Phi\!\left(-\frac{g}{\sqrt{2}\sigma}\right).
\end{align}
Thus, the main implementation trade-offs are explicit: larger $C$ requires distinguishing the leader from more challengers; smaller $\delta$ gives stronger deniability but increases noise; and larger $T$ has only a logarithmic effect on the noise scale.

\end{document}